\documentclass{article}

\usepackage{amsmath, amsthm, amssymb, dsfont, mathtools}
\usepackage{natbib}
\usepackage{enumitem}
\usepackage{hyperref}
\usepackage[margin=1.3in]{geometry}
\hypersetup{colorlinks,citecolor=blue,urlcolor=blue,linkcolor=blue}

\usepackage{algorithm}
\usepackage{algpseudocode}

\usepackage{booktabs}
\usepackage{multirow}

\theoremstyle{plain}
\newtheorem{theorem}{Theorem}[section]
\newtheorem{proposition}[theorem]{Proposition}

\newtheorem{lemma}[theorem]{Lemma}

\theoremstyle{definition}
\newtheorem{example}[theorem]{Example}
\newtheorem{assumption}[theorem]{Assumption}
\newtheorem{remark}[theorem]{Remark}
\newtheorem{definition}[theorem]{Definition}

\newcommand{\E}{\mathbb{E}}    
\newcommand{\R}{\mathbb{R}}    
 
\newcommand{\dd}{\mathrm{d}}     

\renewcommand{\epsilon}{\varepsilon}  

\newcommand{\cD}{\mathcal{D}}

\newcommand{\simiid}{\,{\buildrel \text{iid} \over \sim\,}}

\usepackage{accents}
\newcommand{\ubar}[1]{\underaccent{\bar}{#1}}

\usepackage{nameref}

\newcommand{\id}{\mathds{1}}

\newcommand{\p}{\mathbb{P}}

\newcommand{\truep}{\mathbb{P}^*}

\newcommand{\cP}{\mathcal{P}}

\date{March, 2026}

\title{Tiny but uniform improvements of adaptive BH procedures via compound e-values}

\author{
Nikolaos Ignatiadis\thanks%
 {Department of Statistics and Data Science Institute, University of Chicago.
 E-mail: \href{mailto:ignat@uchicago.edu}{ignat@uchicago.edu}.}
 \and
 Ruodu Wang\thanks%
 {Department of Statistics and Actuarial Science,
 University of Waterloo.
 E-mail: \href{mailto:wang@uwaterloo.ca}{wang@uwaterloo.ca}.}
\and  
Aaditya Ramdas\thanks%
    {Departments of Statistics \& Machine Learning,
    Carnegie Mellon University.
    E-mail: \href{mailto:aramdas@cmu.edu}{aramdas@cmu.edu}.}
}
\begin{document}

\maketitle

\begin{abstract}
After the seminal Benjamini--Hochberg (BH) procedure for controlling the false discovery rate (FDR) was proposed, dozens of papers have attempted to improve its power by \emph{adapting to the unknown proportion of nulls}.
We observe that most null proportion estimates are simply \emph{compound e-values} in disguise, and thus most adaptive FDR procedures can be interpreted as instances of the e-weighted BH (ep-BH) procedure of~\cite*{ignatiadis2024evalues}, i.e., the  BH procedure weighted by compound e-values. This lens helps us show that most existing procedures are inadmissible, and we provide uniform improvements to them. While the improvements are small in practice, they still come for free (without additional assumptions), and help unify the literature. We also use our ``leave-one-out ep-BH method'' to design a new method with finite-sample FDR control for the simultaneous t-test setting.
\end{abstract}

\section{Introduction} 

Compound e-values generalize the notion of e-values and have played a central role in several recent advances in multiple testing. 
Compound e-values can either be used directly for false discovery rate control through the e-BH procedure of~\citet{wang2022false}, or by weighting p-values through the ep-BH procedure of~\citet{ignatiadis2024evalues}.
As one example of the utility of these observations,~\citet{ren2024derandomised} used compound e-values to
derandomize the model-X knockoff filter of~\citet{ candes2018panning}. \cite{ignatiadis2024asymptotic} also showed that \emph{every} FDR controlling procedure can be written as an instance of e-BH with appropriate compound e-values. This latter fact has utility in combining the results from different (possibly randomized) procedures.

\smallskip

In this paper, we show that compound e-values also underlie a much more classical idea in multiple testing: null-proportion adaptivity. That is, most null proportion estimates are compound e-values in disguise and adaptive multiple testing procedures are instances of ep-BH. This finding further underscores the central role of compound e-values, providing a unifying view on the vast literature on adaptive procedures. Perhaps more importantly, it also allows us to strictly improve the power of existing procedures (e.g., Storey's procedure), and to propose new procedures. 

Table~\ref{tab:procedures} summarizes the procedures considered in this work, that is, procedures that we reinterpret or uniformly improve upon, as well as new procedures we propose. We openly admit that these power improvements are expected to be tiny in practice, but they are obtained ``for free'' (without any additional assumptions). In that vein, the title of our paper was inspired by~\cite{solari2017minimally}, who proposed a tiny but uniform improvement to the BH procedure.

\begin{table}
\centering
\caption{Summary of procedures and their uniform improvements.
}
\label{tab:procedures}
\begin{tabular}{lccc}
\toprule
\multirow{ 2}{*}{Original procedure} & Compound  & Uniform & \multirow{ 2}{*}{Details} \\
& e-values & improvement 
\\
\midrule
Storey~\citep{storey2004strong} & \eqref{eq:ek_storey} & \checkmark & \multirow{ 8}{*}{Section~\ref{subsec:homogeneous_nullprop}}\\
MPC~\citep{dohler2023unified} & \eqref{eq:ek_pounds_cheng} & \checkmark &  \\
DM~\citep{dohler2023unified} & \eqref{eq:doehler_meah} & \checkmark &  \\
Quant~\citep{benjamini2006adaptive} & \eqref{eq:ek_quantile} & \checkmark &  \\ 
IBHlog~\citep{zeisel2011fdr} & \eqref{eq:Ek_IBHlog} & \checkmark & \\
Min-Storey~\citep{zijun2026minstorey} & \eqref{eq:Ek_MS}  & \checkmark & \\
MABH~\citep{solari2017minimally} & \eqref{eq:mabh} & $=$ &  \\
TST~\citep{benjamini2006adaptive} & \eqref{eq:E_k_tst} & \checkmark &  \\ \hline
W-Max-Storey~\citep{ramdas2019unified} & \eqref{eq:ek_weighted_storey} & \checkmark & \multirow{ 3}{*}{Section~\ref{subsec:wtd_adaptivity}} \\ 
W-LOO-Storey~\citep{zhao2024censored} &     \eqref{eq:ek_weighted_storey_loo} & \checkmark & \\
W-DM+  (ours) & \eqref{eq:ek_wdm+} & (new) &  \\ \hline 
Combination~\citep{dohler2023unified} & \eqref{eq:compound_evals_combi} & \checkmark & Section~\ref{sec:combining} \\ \hline
LOO-Var+ (ours) & \eqref{eq:loovar_compound_evals} & (new) & Supplement~\ref{sec:simultaneous_ttests} \\ 
\bottomrule
\end{tabular}
\end{table}

\newpage 

\section{Background on compound e-values and ep-BH}

For more details on compound e-values and their usage in multiple testing, see~\cite{ignatiadis2024asymptotic} and~\citet[Chapter 9]{ramdas2025hypothesis}. Below we give the basic definitions needed for our paper.
We observe data $X$ drawn from an unknown distribution $\truep$ on some underlying sample space, and consider testing $K$ hypotheses $H_k: \truep \in \mathcal{P}_k$ for $k \in \mathcal{K} = \{1,\dots,K\}$, where each $\mathcal{P}_k$ is a set of distributions. A p-variable for $H_k$ is a nonnegative random variable $P_k=P_k(X)$ satisfying $\p(P_k\le t)\le t$ for all $t \in [0,1]$ and $\p \in \mathcal{P}_k$. An e-variable for $H_k$ is a $[0,\infty]$-valued random variable $E_k=E_k(X)$ satisfying $\mathbb{E}^{\p}[E_k]\le 1$ for all $\p \in \mathcal{P}_k$. We use $\cP$ (without subscript) to denote a set of distributions encoding assumptions on the joint data-generating process. Compound p-variables and e-variables relax the above per-hypothesis conditions for p-variables, respectively e-variables, to a single condition that averages over the true nulls.

\begin{definition}[Compound p-variables and e-variables]
\label{defi:compound_evalues}
Fix the null hypotheses $(\cP_1,\dots,\cP_K)$ and a set $\cP $ of distributions.

\begin{enumerate}[label=(\roman*)]
\item Let $P_1,\dots,P_K$ be nonnegative random variables. We say that 
$P_1,\dots,P_K$ are \emph{compound} p-variables for $(\cP_1,\dots,\cP_K)$ under $\cP$ if 
$$
\sum_{k : \p \in \cP_k} \p(P_k\leq t) \leq Kt \qquad \mbox{for all $t\in (0,1)$ and all $\p\in \mathcal P$.}
$$
\item Let $E_1,\dots,E_K$ be $[0,\infty]$-valued random variables. We say that 
$E_1,\dots,E_K$ are \emph{compound} e-variables for $(\cP_1,\dots,\cP_K)$ under $\cP$ if
$$\sum_{k : \p \in \cP_k} \E^{\p}[E_k] \leq K \qquad \mbox{for all $\p\in \mathcal P$.}$$
\end{enumerate}
In both cases, if $K$ in the right hand side is replaced by $K(1+\varepsilon)$ for $\varepsilon \geq 0$, then we 
speak of $\varepsilon$-approximate compound p-variables and e-variables. (This is identical to $(\varepsilon,\delta)$-approximation in \cite{ignatiadis2024asymptotic} with $\delta=0$.)
\end{definition}

We omit ``under $\cP$'' in case $\cP$ is the set of all distributions. 
The terms ``p-values'' and ``e-values" refer to the realizations of p-variables and e-variables, but we sometimes also treat them as random variables as commonly done in the literature. We also write $\mathcal{N} = \{k \in \mathcal{K}\,:\,  \truep \in \mathcal{P}_k\} \subseteq \mathcal{K},$ to denote the indices of true null hypotheses and $\pi_0=|\mathcal N|/K$ to denote the proportion of null hypotheses.

A multiple testing procedure $\mathcal{D}$ is defined as a function mapping the realized values of statistics (e.g., p-values or e-values) 
to a subset of indices of rejected hypotheses. For notational convenience, we use the random variables as the input of multiple testing procedures. The resulting procedures are the same as  the ones formally defined with realized values as the input. The rejected hypotheses by  $\cD$ are called discoveries. 
We write $V = V_{\cD}=|\cD \cap \mathcal N|$ as the number of true null hypotheses that are rejected (i.e., false discoveries), and $R = R_{\cD}=|\cD|$ as the total number of discoveries. In this work we are specifically interested in procedures controlling the false discovery rate~\citep{benjamini1995controlling}, defined as $\mathrm{FDR} \equiv \mathrm{FDR}(\cD, \mathbb P) =  \mathbb E^{\mathbb P}[ V_{\cD}/ R_{\cD}]$ with the convention $0/0=0$. We present all results in a unified way via the ep-BH procedure:

\begin{definition}[The ep-BH procedure]
\label{defi:weighted-pBH}
Let $P_1,\dots, P_K$ be p-variables for the hypotheses $\cP_1,\dots,\cP_K$ and let $E_1,\dots, E_K$ be compound e-variables. Define $Q_k = P_k / E_k$ with the convention $0/0=0$. 
For $k\in \mathcal K$, let $Q_{(k)}$ be the $k$-th order statistic of $Q_1,\ldots,Q_K$, from the smallest to the largest. The ep-BH procedure rejects hypothesis $H_k$ if $Q_k$ is among the smallest $k_\alpha^*$ values $Q_1,\dots,Q_K$, where 
\begin{equation*} 
k_\alpha^*:=\max\left\{k\in \mathcal K: \frac{K Q_{(k)}}{k} \le \alpha \right\},
\end{equation*}     
with the convention $\max(\varnothing) = 0$. 
\end{definition} 

When $E_k=1$ for all $k \in \mathcal{K}$, then the above procedure is the p-BH procedure of~\citet{benjamini1995controlling}. When the $E_k$ are deterministic and such that $\sum_{k \in \mathcal{K}} E_k = K$, then the above procedure is the weighted p-BH procedure of~\citet{genovese2006false}. In the general case, where $E_1,\dots,E_K$ are compound e-variables, then the above procedure is the ep-BH (e-weighted p-BH) procedure of~\citet{ignatiadis2024evalues}.

In common cases, we can also interpret ep-BH as being p-BH applied to certain compound p-variables.

\begin{proposition}
\label{prop:ep-to-p}
Let $E_1,\ldots,E_K$ be compound e-variables and $P_1,\ldots,P_K$ be p-variables. Assume that for each $k \in \mathcal{N}$, $P_k$ is independent of $E_k$. Then
$Q_k := P_k/E_k$
are compound p-variables.
\end{proposition}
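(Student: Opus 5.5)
Fix $\p\in\cP$ and $t\in(0,1)$; we need $\sum_{k\in\mathcal N}\p(Q_k\le t)\le Kt$, where $\mathcal N$ is the set of nulls under $\p$. The plan is to bound each summand by $t\,\E^{\p}[E_k]$ and then sum, using the compound e-variable property $\sum_{k\in\mathcal N}\E^{\p}[E_k]\le K$.

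\textbf{Per-hypothesis bound.} Fix $k\in\mathcal N$. On $\{E_k=0\}$ we have $Q_k=P_k/0$, which is $+\infty$ when $P_k>0$ and $0$ when $P_k=0$ by the convention $0/0=0$. On $\{E_k>0\}$ (allowing $E_k=\infty$, where $Q_k=0$) we have $\{Q_k\le t\}=\{P_k\le tE_k\}$. Hence
\[
\p(Q_k\le t)\le \p(P_k=0,\,E_k=0)+\p(P_k\le tE_k,\,E_k>0).
\]
The first term is at most $\p(P_k=0)\le \p(P_k\le s)\le s$ for every $s>0$, so it vanishes. For the second term, condition on $E_k$ and use independence of $P_k$ and $E_k$ together with the validity of $P_k$:
\[
\p(P_k\le tE_k\mid E_k=e)=\p(P_k\le te)\le \min(te,1)\le te .
\]
This gives $\p(Q_k\le t)\le \E^{\p}[\min(tE_k,1)]\le t\,\E^{\p}[E_k]$. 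The event $E_k=\infty$ is harmless: either it has probability zero, or $\E[E_k]=\infty$ and the bound is trivial.

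\textbf{Summing.} Summing over $k\in\mathcal N$ gives
\[
\sum_{k\in\mathcal N}\p(Q_k\le t)\le t\sum_{k\in\mathcal N}\E^{\p}[E_k]\le Kt,
\]
which is exactly the definition of compound p-variables.

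\textbf{Main obstacle.} The only delicate point is the measure-theoretic conditioning step, together with the boundary conventions $0/0$ and $E_k\in\{0,\infty\}$. Independence makes the conditioning rigorous, for instance by Fubini applied to the product law of $(P_k,E_k)$. The conventions are handled by the case split in the per-hypothesis bound.
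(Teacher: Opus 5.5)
Your proof is correct and uses the natural argument: $\p(Q_k\le t)\le \E^{\p}[\min(tE_k,1)]\le t\,\E^{\p}[E_k]$ for each null $k$, by independence and super-uniformity, then summing over the nulls and applying the compound e-value property. The paper states this proposition without a written proof; your argument is the compound analogue of the standard fact that $P/E$ is a p-value when $P$ is independent of an e-value $E$. Your handling of the boundary cases $E_k\in\{0,\infty\}$ and the $0/0$ convention is sound. Reading the independence assumption as holding under every $\p\in\cP$ for the nulls of $\p$ is what the definition of compound p-variables requires.
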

There are two utilities of this result.
  First, given that e-variables can be used as unnormalized weights in multiple testing procedures~\citep{ignatiadis2024evalues}, the above proposition actually also provides an interesting example of how compound p-values may naturally arise in multiple testing, different from the examples in~\cite{armstrong2022false, barber2025false}. 
  The second upshot is that whenever we can prove that ep-BH controls the FDR and the independence condition of Proposition~\ref{prop:ep-to-p} holds, then we have also proved a novel guarantee of FDR control for p-BH with compound p-values.
  As we will see below, we can get a lot of mileage from compound p-values arising in this particular fashion, as compared to the existing general results on p-BH with compound p-values proved in these papers.

\section{New results on FDR control with ep-BH}
\label{sec:general_results_epbh}

Let us start by defining two notions of dependence among multiple p-values in a multiple testing setting, p-independence and positive regression dependence on a subset~\citep[Section 4]{finner2009false} (which is slightly weaker than the original definition in \cite{benjamini2001control}). Below, a set $A\subseteq \R^K$ is said to be non-decreasing
if $x\in A$ implies $y\in A$ for all $y\ge x$.

\begin{definition}[P-independence]
\label{definition:pindependence}
The p-values $P_1,\ldots,P_K$ satisfy the p-independence property if the null p-values $(P_k)_{k \in \mathcal{N}}$ are mutually independent, and the null p-values $(P_k)_{k \in \mathcal{N}}$ are independent of the non-null p-values $(P_k)_{k \notin \mathcal{N}}$.
\end{definition}

\begin{definition}[Positive regression dependence on a subset]
\label{definition:prds}
The p-values $P_1,\ldots, P_K$ satisfy positive regression dependence on a subset if for any null index $k\in \mathcal N$ and non-decreasing set $A  \subseteq  \R^K$, the
function $x\mapsto \p\{(P_{\ell})_{\ell \in \mathcal{K}} \in A\mid P_k\le x\}$ is non-decreasing on $ [0,1]$.  
\end{definition}
We note that p-independence implies PRDS. Occasionally we will use the following shorthand notations: $\mathbf{E} = (E_1,\ldots,E_K)$ and $\mathbf{P}=(P_1,\ldots,P_K)$.
Moreover for any $k \in \mathcal{K}$ we write $\mathbf{P}_{-k}=(P_1,\ldots,P_{k-1},P_{k+1},\ldots, P_K)$ and $\mathbf{P}_{k \to c}=(P_1,\ldots,P_{k-1},c, P_{k+1},\ldots, P_K)$ (where we will use $c=0$ in most cases).

We first provide a general result on the ep-BH procedure with compound e-values.

\begin{theorem}
Let $P_1,\ldots,P_K$ be PRDS p-values, and let $E_1,\ldots,E_K$ be compound e-values that are independent of $P_1,\ldots,P_K$.
Then, ep-BH applied to $\mathbf{P}$ and $\mathbf{E}$ at level $\alpha$ controls $\mathrm{FDR}$, that is,
$\mathrm{FDR}_{\textnormal{ep-BH}} \leq \alpha.$
\label{theorem:epbh_fdr}
\end{theorem}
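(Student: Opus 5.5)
Because $\mathbf{E}$ is independent of $\mathbf{P}$, I will condition on $\mathbf{E}=\mathbf{e}$ and treat the e-values as fixed nonnegative weights. For fixed weights, ep-BH is just the weighted BH procedure with unnormalized weights $e_1,\dots,e_K$. The plan is to show that, conditionally,
\begin{equation*}
\E\left[\frac{V}{R \vee 1}\,\Big|\, \mathbf{E}=\mathbf{e}\right] \le \frac{\alpha}{K}\sum_{k\in\mathcal N} e_k ,
\end{equation*}
and then take expectations over $\mathbf{E}$. The compound e-value condition $\sum_{k\in\mathcal N}\E[E_k]\le K$ then gives $\mathrm{FDR}\le \alpha$. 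Conditioning does not disturb the PRDS property of $\mathbf{P}$, because independence means the conditional law of $\mathbf{P}$ given $\mathbf{E}=\mathbf{e}$ equals its unconditional law.

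For the conditional bound, fix $\mathbf{e}$ and use the standard decomposition
\begin{equation*}
\frac{V}{R\vee 1}=\sum_{k\in\mathcal N}\frac{\id\{k\in\cD\}}{R\vee 1}.
\end{equation*}
Nulls with $e_k=0$ contribute nothing. If $P_k>0$ then $Q_k=\infty$ and $H_k$ is never rejected. If $P_k=0$ then $0/0=0$, which happens with probability zero for a null that is uniform or superuniform near zero; more carefully, $\p(P_k\le 0)\le 0$. If $e_k=\infty$, I would handle it by a limiting argument, or simply note that it occurs with zero probability on the event where it matters, since $\E[E_k]<\infty$ for nulls. For $0<e_k<\infty$, note that $H_k$ is rejected exactly when $P_k\le \alpha e_k R/K$. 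Moreover $R$, viewed as a function of $\mathbf{P}$ with $\mathbf{e}$ fixed, is coordinatewise non-increasing in $\mathbf{P}$ and is a step-up quantity. Writing $R$ in terms of $r=1,\dots,K$, this gives
\begin{equation*}
\E\left[\frac{\id\{k\in\cD\}}{R\vee1}\right]=\sum_{r=1}^K \frac{1}{r}\,\p\bigl(P_k\le \alpha e_k r/K,\ R=r\bigr).
\end{equation*}

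This is exactly the setup of the Blanchard--Roquain / Finner et al.\ self-consistency lemma. Let $c_r=\alpha e_k r/K$ be nondecreasing thresholds, and let $R$ be a non-increasing function of $\mathbf{P}$. Then, under PRDS with respect to the null $P_k$,
\begin{equation*}
\sum_{r}\frac{1}{r}\p(P_k\le c_r,\ R=r)\le \frac{\alpha e_k}{K}.
\end{equation*}
Here I would use $\p(P_k\le c)\le c$. I will reproduce that argument. Write $\id\{R=r\}=\id\{R\ge r\}-\id\{R\ge r+1\}$ and use that $\{R\ge r\}$ is a non-increasing set, so its complement is non-decreasing. PRDS then implies that $\p(R\ge r\mid P_k\le c)$ is non-increasing in $c$. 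Summation by parts yields the bound $\sum_r (c_r/r)\,[\p(R\ge r\mid P_k\le c_r)-\p(R\ge r+1\mid P_k\le c_{r+1})]$-type terms, which telescope using $c_r/r=\alpha e_k/K$ constant.

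The main obstacle is doing this telescoping cleanly with the conditional probabilities. I will rely on the fact that $c_r/r$ is constant in $r$, so each term equals $(\alpha e_k/K)\,\p(R=r\mid P_k\le c_r)$ after bounding $\p(P_k\le c_r)\le c_r$. Monotonicity of $\p(R\le r\mid P_k\le c)$ in $c$ then makes the sum at most $1$. Summing over $k\in\mathcal N$ and integrating over $\mathbf{E}$ completes the proof.
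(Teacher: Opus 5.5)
Your proposal is correct and follows essentially the same route as the paper. Both condition on $\mathbf{E}$ (independence preserves PRDS), bound the conditional FDR of the resulting weighted BH procedure by $\frac{\alpha}{K}\sum_{k\in\mathcal N} E_k$, and average using the compound e-value property. The only difference is that you re-derive the weighted-BH bound under PRDS via the superuniformity/telescoping argument, whereas the paper simply cites Theorem 1 of \citet{ramdas2019unified}.
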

The above is analogous to Theorem 4 of~\citet{ignatiadis2024evalues} who state such a result for e-values rather
compound e-values. Neither result generalizes each other because their result makes a stronger assumption (e-values) but also proves a tighter upper bound of $\pi_0 \alpha$. Notably in this result, the dependence among the compound e-values can be arbitrary.

We next provide a new high-level guarantee on FDR control for the ep-BH procedure. This guarantee is  different from Theorem~\ref{theorem:epbh_fdr} because the compound e-values and p-values are dependent.

\begin{theorem}
\label{theorem:loo_epbh}
Suppose that conditional on some random variable $U$ that takes values in $\mathcal{U}$, $P_1, \ldots, P_K$ are p-values satisfying p-independence. Suppose further that
\begin{description}[style=nextline]
\item[($\ast$)] For each $k \in \mathcal{K}$, we have $E_k = g_k(\mathbf{P}, U)$ for some  $g_k \colon [0,1]^{K} \times \mathcal{U} \to [0, \infty]$ that is coordinate-wise non-increasing in its first $K$ coordinates, and moreover $E_k \leq \tilde{E}_k = \tilde{h}_k(\mathbf{P}_{-k}, U)$ for some  $\tilde{h}_k \colon [0,1]^{K-1} \times \mathcal{U} \to [0, \infty]$, where $\tilde{E}_1, \ldots, \tilde{E}_K$ are compound e-values.
\end{description}
Then ep-BH applied to $\mathbf{P}$ and $\mathbf{E}$ at level $\alpha$ satisfies
$
\mathrm{FDR}_{\textnormal{ep-BH}} \leq \alpha$.
If $\tilde{E}_1, \ldots, \tilde{E}_K$
are $\varepsilon$-approximate compound e-values (instead of compound e-values), then ep-BH controls $\mathrm{FDR}$ at $\alpha(1+\varepsilon)$.
\end{theorem}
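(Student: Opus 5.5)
We condition on $U$ throughout and prove $\E[V/R \mid U] \le \frac{\alpha}{K}\sum_{k\in\mathcal N}\E[\tilde E_k\mid U]$. Taking expectations then gives $\mathrm{FDR}\le \frac{\alpha}{K}\sum_{k\in\mathcal N}\E[\tilde E_k]\le\alpha$, or $\le \alpha(1+\varepsilon)$ in the approximate case. So fix $U=u$ and treat $g_k(\cdot,u)$ and $\tilde h_k(\cdot,u)$ as deterministic functions.

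First I decompose the FDR in the usual way, $V/R=\sum_{k\in\mathcal N}\id\{k\in\cD\}/R$. For ep-BH, hypothesis $k$ is rejected exactly when $P_k\le \alpha E_k R/K$, where $R=k^*_\alpha$ is the number of rejections. Since $E_k\le\tilde E_k$, each null term is bounded by
\[
\frac{\id\{P_k\le \alpha \tilde E_k R/K\}}{R}.
\]
Now fix a null $k$ and consider how things change with $P_k$ while $\mathbf P_{-k}$ is held fixed. The quantity $\tilde E_k$ does not depend on $P_k$. As $P_k$ decreases, every $E_j=g_j(\mathbf P,u)$ weakly increases by the coordinatewise monotonicity, so every $Q_j$ weakly decreases and $Q_k$ itself decreases. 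Hence $R$, viewed as a function of $P_k$, is non-increasing in $P_k$. Set $R_k:=R(\mathbf P_{k\to 0})$. This is a measurable function of $(\mathbf P_{-k},u)$ only, and $R\le R_k$. I would then use the standard self-consistency/leave-one-out argument. On the event $\{k\in\cD\}$, I claim $R=R_k$. Indeed, for $P_k\le \alpha E_k R/K$, decreasing $P_k$ to $0$ keeps $H_k$ rejected and only lowers the other $Q_j$'s, so the step-up count can only increase. I also need it not to increase. This is the main obstacle, because the weights $E_j$ change with $P_k$, unlike in the classical BH argument.

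If the equality $R=R_k$ fails, I will avoid it and instead prove the weaker inequality directly:
\[
\frac{\id\{k\in\cD\}}{R}\le \sum_{r=1}^K \frac{\id\{P_k\le \alpha\tilde E_k r/K\}\,\id\{R=r\}}{r}.
\]
I would then bound $\E[\,\cdot\mid \mathbf P_{-k},u]$ using the fact that $P_k$ is independent of $\mathbf P_{-k}$ given $u$ (p-independence) and superuniform. The cleanest route is a lemma stating that for any non-increasing function $f(P_k)\in\{1,\dots,K\}$ and any $c\ge 0$ independent of $P_k$,
\[
\E\left[\frac{\id\{P_k\le c f(P_k)\}}{f(P_k)}\right]\le c .
\]
This is the known superuniformity lemma of Blanchard--Roquain / Ramdas et al., proved by writing the indicator over the level sets of $f$ and using that $f(P_k)$ is monotone in $P_k$. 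Applying it with $f(p)=R(\mathbf P_{k\to p})$ and $c=\alpha\tilde E_k/K$ (which is fixed given $\mathbf P_{-k},u$) gives
\[
\E\left[\frac{\id\{k\in\cD\}}{R}\,\Big|\,\mathbf P_{-k},u\right]\le \frac{\alpha\tilde E_k}{K}.
\]
I expect the monotonicity of $R$ in $P_k$ to be the only delicate point in verifying the hypotheses of this lemma, and it follows from $(\ast)$ as described above.

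Finally, I sum over $k\in\mathcal N$, take expectations, and invoke the compound e-value property of $\tilde E$. This yields $\mathrm{FDR}\le\alpha\cdot\frac1K\sum_{k\in\mathcal N}\E[\tilde E_k]\le\alpha$, and $\le\alpha(1+\varepsilon)$ in the approximate case. Note that the p-independence assumption is used only for independence between null $P_k$ and $\mathbf P_{-k}$ given $U$, and the monotone non-increasing structure in $(\ast)$ replaces the PRDS argument.
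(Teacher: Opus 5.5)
Your proposal is correct and is essentially the paper's proof: conditional on $(\mathbf P_{-k},U)$ you apply the superuniformity lemma of Blanchard--Roquain / Ramdas et al., using that under $(\ast)$ the ep-BH rejection count is non-increasing in $P_k$, and then sum over nulls and use the compound e-value property of $\tilde E$. The only difference is cosmetic: you first enlarge the threshold from $E_k$ to $\tilde E_k$ and apply the lemma with $f=R$ and $c=\alpha\tilde E_k/K$, whereas the paper multiplies and divides by $E_k$ and uses $f=E_kR$ and $c=\alpha/K$; your variant never divides by $E_k$, and your tentative claim $R=R(\mathbf P_{k\to 0})$ on $\{k\in\cD\}$ is in fact false in general here (lowering $P_k$ raises the other $E_j$ and can add rejections), so you are right not to rely on it.
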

The result with $\varepsilon$-approximate compound e-values can be used to prove asymptotic FDR control 
as in~\citet{ignatiadis2024asymptotic}.

While condition ($\ast$) might seem complicated at first sight, a simple and frequent way that is satisfied is that for each $k \in \mathcal{K}$, we have $E_k = h_k(\mathbf{P}_{-k}, U)$ for some function $h_k \colon [0,1]^{K-1} \times \mathcal{U} \to [0, \infty]$ that is coordinate-wise non-increasing in its first $K-1$ coordinates.
A special case is that  $U$ is independent of $\mathbf{P}$, which 
can arise when $\mathbf{E}$ and $\mathbf{P}$
 are computed on different parts of the entire dataset (assumed to contain independent data points).

For most of our results, we will apply this theorem without any $U$ (formally, taking $U$ to be constant). The additional generality is needed for the result of Section~\ref{sec:simultaneous_ttests_main}.

Theorem~\ref{theorem:loo_epbh_censoring} of Supplement~\ref{sec:tau_censoring} provides a further guarantee on FDR control for ep-BH.

\section{Adaptive multiple testing with ep-BH}

We now introduce a slew of existing adaptive multiple testing procedures and our improvements.
Throughout this section, we assume that $P_1,\ldots,P_K$ are p-values that satisfy p-independence except for Example~\ref{example:mabh} that only assumes PRDS.

\subsection{Homogeneous null-proportion adaptivity}
\label{subsec:homogeneous_nullprop}
In our setting (with p-independent p-values), p-BH controls the FDR at level $\pi_0 \alpha$; recall that $\pi_0 = |\mathcal{N}|/K$ is the proportion of null hypotheses. Starting from~\citet{benjamini2000adaptive}, this has led to a wealth of procedures that seek to improve the power of p-BH by estimating $\pi_0$. A common blueprint is presented in Algorithm~\ref{algo:usual_null_prop}:

\begin{algorithm}[H]
\caption{Adaptive p-BH procedure with $\hat{\pi}_0$}
\begin{algorithmic}[1]
\Require p-values $\mathbf{P}$, nominal level $\alpha$, null proportion estimator $\hat{\pi}_0$
\State Compute $\hat{\pi}_0 \equiv \hat{\pi}_0(\mathbf{P})$.
\State Apply p-BH at level $\alpha / \hat{\pi}_0$ with p-values $\mathbf{P}$.
\end{algorithmic}
\label{algo:usual_null_prop}
\end{algorithm}

A main takeaway of our paper is that it is fruitful to interpret adaptive procedures of the form in Algorithm~\ref{algo:usual_null_prop} as instantiations of ep-BH with p-values $P_k$ and compound e-values $E_k$,
\begin{equation}
E_k := \frac{1}{ \hat{\pi}_0}\, \text{ for all }\, k \in \mathcal{K}.
\label{eq:null_prop_adaptivity}
\end{equation}
Below we show that this is indeed true by explaining that the $E_k$ in~\eqref{eq:null_prop_adaptivity} are compound e-values and that
condition ($\ast$) of Theorem~\ref{theorem:loo_epbh} holds for several common null proportion estimators. In some cases, we postpone this verification to Supplement~\ref{sec:omitted_null_prop}.

\begin{example}[Storey]
Fix $\tau \in (0,1)$. The null-proportion adaptive procedure of~\citet{storey2004strong} is equivalent to the ep-BH procedure with
\begin{equation}
E_k^{\mathrm{Storey}} := \frac{K(1-\tau)}{1 + \sum_{\ell \in \mathcal{K}} \id_{\{ P_{\ell} > \tau\}}}.
\label{eq:ek_storey}
\end{equation}
If the constant $1$ in the denominator is replaced by any constant $c \in (0,1)$, then we still get $\varepsilon$-approximate compound e-values with $\varepsilon = 2(1-c)/\{c(|\mathcal{N}|+1)(1-\tau)\}$.
\label{example:storey}
\end{example}

\begin{remark}[$\tau$-censoring]
The original Storey procedure, as developed by
\citet{storey2004strong}, requires one further modification to the p-BH procedure in addition to using $E_k$ in~\eqref{eq:ek_storey}: they do not allow rejecting any hypothesis with $P_k > \tau$.
This second modification is not required~\citep{benjamini2006adaptive}, and is an artifact of the martingale-based proof of~\citet{storey2004strong}. Nevertheless, in other cases, $\tau$-censoring seems to be more fundamental, e.g., when the null-proportion estimator relies on an optional stopping argument~\citep{gao2025adaptive}, or when one designs multiple testing procedures that use side-information~\citep{li2019multiple, ignatiadis2021covariate, zhao2024censored}. We can express $\tau$-censoring directly in terms of ep-BH; see Supplement~\ref{sec:tau_censoring} for details and guarantees of such constructions.
\label{remark:tau-censoring}
\end{remark}

\begin{example}[Modified Pounds-Cheng, MPC]
The null-proportion adaptive procedure of~\citet{pounds2006robust} with the modification of~\citet{dohler2023unified} is equivalent to ep-BH with
\begin{equation}
E_k^{\mathrm{MPC}} := \frac{K}{2 + 2\sum_{\ell \in \mathcal{K}} P_{\ell}}.
\label{eq:ek_pounds_cheng}
\end{equation}
\end{example}

\begin{example}[D{\"o}hler and Meah, DM]
\label{example:dohler_meah}
Let $\psi_k: [0,1] \to [0,1]$ be non-decreasing with $\nu_k := \int_0^1 \psi_k(u) \dd u > 0$.\footnote{Suppose that for all $\mathbb P \in \mathcal{P}_k$ we have that $P_k$ is stochastically greater or equal to a further random variable $P_k^0$. Then we can also take $\nu_k := \mathbb E[\psi_k(P_k^0)] \geq \int_0^1 \psi_k(u) \dd u$. 
As noted by~\citet{dohler2023unified}, such a choice can improve power for discrete tests as compared to using the uniform distribution to compute $\nu_k$.
} The unified procedure of~\citet{dohler2023unified} is equivalent to ep-BH with 
\begin{equation}
E_k^{\mathrm{DM}} :=   \frac{K}{\max_{\ell \in \mathcal{K}} (1/\nu_{\ell}) + \sum_{\ell \in \mathcal{K}} \psi_{\ell}(P_{\ell}) /\nu_{\ell}}. 
\label{eq:doehler_meah}
\end{equation}
When all $\psi_{k}$ and $\nu_k$ are the same, say $\psi$ and $\nu$, the $E_k$ above further simplify to,
$
E_k^{\mathrm{DM}} = (K \nu)/(1 + \sum_{\ell \in \mathcal{K}} \psi(P_{\ell})).
$
Moreover, the choices $\psi(u) = \id_{\{p > \tau\}}$ and $\psi(u)=u$ yield Storey's, respectively the MPC procedures.
\end{example}

Below we provide Theorem 11 of~\citet{blanchard2009adaptive} (also see~\citet{benjamini2006adaptive} and Theorem 3.3. in~\citet{sarkar2008methods}) and its application to FDR control of all of the above procedures. We make the following critical leave-one-out assumption on $\hat \pi_0$.

\begin{assumption}[LOO compound e-values]
\label{assumption:loo_compound}    
The null-proportion estimator $\hat{\pi}_0$ is coordinate-wise non-decreasing in $\mathbf P$ and  $ E_k:=\hat{\pi}^{-1}_0(\mathbf{P}_{k \to 0})$, $k \in [K],$ are compound e-values, meaning
\begin{equation}
\sum_{k \in \mathcal{N}}  \mathbb E\left[ \frac{1}{ \hat{\pi}_0(\mathbf{P}_{k \to 0})} \right] \leq K.
\label{eq:BR}
\end{equation}
\end{assumption}

\begin{theorem}[\citet{blanchard2009adaptive}]
Let $P_1,\ldots,P_K$ be p-independent p-values and
$\hat \pi_0$ be a null-proportion estimator such that Assumption~\ref{assumption:loo_compound} holds. 
Then the adaptive p-BH procedure with $\hat{\pi}_0$ (Algorithm~\ref{algo:usual_null_prop}) satisfies 
$
\mathrm{FDR} \leq \alpha.
$
\label{theorem:BR}
\end{theorem}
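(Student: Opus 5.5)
The plan is to deduce the statement from Theorem~\ref{theorem:loo_epbh}, taking $U$ constant. By \eqref{eq:null_prop_adaptivity}, Algorithm~\ref{algo:usual_null_prop} is the same as ep-BH applied to $\mathbf{P}$ with $E_k := 1/\hat{\pi}_0(\mathbf{P})$ for every $k$. Indeed, with $Q_k = P_k \hat\pi_0$, the condition $K Q_{(k)}/k \le \alpha$ is equivalent to $K P_{(k)}/k \le \alpha/\hat\pi_0$, since the ordering of the $Q_k$ agrees with that of the $P_k$. If $\hat\pi_0 = 0$ (so $E_k=\infty$), the convention $0/0=0$ and $P_k/\infty = 0$ make everything rejected, consistent with level $\alpha/0=\infty$. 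So the two procedures produce identical rejection sets, and it suffices to verify condition ($\ast$) for this choice of $\mathbf{E}$.

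For condition ($\ast$), take $g_k(\mathbf{p}) := 1/\hat\pi_0(\mathbf{p})$. Since $\hat\pi_0$ is coordinate-wise non-decreasing by Assumption~\ref{assumption:loo_compound}, $g_k$ is coordinate-wise non-increasing. Next define the dominating variables
\[
\tilde E_k := \tilde h_k(\mathbf{P}_{-k}) := \frac{1}{\hat\pi_0(\mathbf{P}_{k\to 0})}.
\]
These depend only on $\mathbf{P}_{-k}$. By monotonicity of $\hat\pi_0$ we have $\hat\pi_0(\mathbf{P}_{k\to0}) \le \hat\pi_0(\mathbf{P})$, and hence $E_k \le \tilde E_k$. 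Finally, \eqref{eq:BR} states exactly that $\sum_{k\in\mathcal N}\E[\tilde E_k]\le K$, so the $\tilde E_k$ are compound e-values under the true distribution. That is all Theorem~\ref{theorem:loo_epbh} uses: its proof only averages over the true nulls under $\truep$. Applying the theorem then yields $\mathrm{FDR}\le\alpha$.

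The main obstacle I anticipate is bookkeeping rather than substance. First, the edge cases where $\hat\pi_0$ is zero or exceeds one need care in matching Algorithm~\ref{algo:usual_null_prop} to ep-BH. Second, the compound e-value requirement in Theorem~\ref{theorem:loo_epbh} must be read as the single-distribution inequality \eqref{eq:BR}.

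Should one prefer a self-contained argument, the direct route is as follows. Write
\[
\mathrm{FDR}=\sum_{k\in\mathcal N}\E\bigl[\id\{P_k\le \alpha R/(K\hat\pi_0)\}/R\bigr].
\]
Then use the usual leave-one-out replacement: on the event that $H_k$ is rejected, $R = R(\mathbf{P}_{k\to0})$. Bound $\hat\pi_0(\mathbf{P})\ge\hat\pi_0(\mathbf{P}_{k\to0})$. Condition on $\mathbf{P}_{-k}$ and use the uniform validity and independence of the null $P_k$ to get
\[
\E\bigl[\cdot\mid\mathbf{P}_{-k}\bigr]\le \frac{\alpha}{K\hat\pi_0(\mathbf{P}_{k\to0})}.
\]
Summing over $k \in \mathcal{N}$ and applying \eqref{eq:BR} gives the bound $\alpha$.
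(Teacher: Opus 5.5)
Your main argument is correct and is essentially the paper's proof. The paper also invokes Theorem~\ref{theorem:loo_epbh} with $E_k = 1/\hat\pi_0(\mathbf{P})$, uses monotonicity of $\hat\pi_0$ to get $E_k \le 1/\hat\pi_0(\mathbf{P}_{k\to 0})$, and takes $\tilde h_k(\mathbf{P}_{-k}) = 1/\hat\pi_0(\mathbf{P}_{k\to 0})$, whose compound e-value property is exactly \eqref{eq:BR}. Your explicit check that Algorithm~\ref{algo:usual_null_prop} coincides with ep-BH is a useful detail the paper leaves implicit.

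The optional ``self-contained'' route, however, contains a false step. For the adaptive procedure it is not true that $R(\mathbf{P}) = R(\mathbf{P}_{k\to 0})$ on the event that $H_k$ is rejected. Setting $P_k$ to $0$ also lowers $\hat\pi_0$ and hence raises the BH level, so in general only $R(\mathbf{P}_{k\to 0}) \ge R(\mathbf{P})$ holds, and the inequality can be strict. For example, take Storey's estimator with $K=3$, $\tau=0.1$, $\alpha=0.5$ and $\mathbf{P}=(0.15,0.05,0.6)$. The level is $0.45$ and $H_1$ is rejected with $R=2$; after setting $P_1=0$ the level becomes $0.675$ and $R=3$. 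This inequality points the wrong way for the factor $1/R$, so the substitution does not give an upper bound.

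To repair it, keep $R(\mathbf{P})$ and replace only $\hat\pi_0(\mathbf{P})$ by $\hat\pi_0(\mathbf{P}_{k\to0})$ inside the indicator. Then use two facts: $R(\mathbf{P})$ is non-increasing in $P_k$ for fixed $\mathbf{P}_{-k}$, and the superuniformity lemma gives $\E[\id\{P_k \le c f(P_k)\}/f(P_k) \mid \mathbf{P}_{-k}] \le c$ for non-increasing $f$. This is exactly the argument inside the proof of Theorem~\ref{theorem:loo_epbh}, so your primary route already covers it.
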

\begin{proof}
We provide a direct proof in terms of Theorem~\ref{theorem:loo_epbh}. Observe that,
$1/\hat{\pi}_0(\mathbf P) \leq 1/ \hat{\pi}_0(\mathbf{P}_{k \to 0}).$
From this and~\eqref{eq:BR} we get that $E_k := 1/\hat{\pi}_0$ are compound e-values and moreover that condition ($\ast$) of Theorem~\ref{theorem:loo_epbh} is satisfied with the choice
$\tilde{h}_k(\mathbf{P}_{-k}) \equiv 1/ \hat{\pi}_0(\mathbf{P}_{k \to 0}).  
$
\end{proof}
\citet{dohler2023unified} use the above result to show that any adaptive procedure of the unified form in Example~\ref{example:dohler_meah} controls the FDR (under the conditions of the above theorem). To do so, in the proof of their Proposition 3.2, they show that,
\begin{equation}
\mathbb E\left[ \frac{1}{ \hat{\pi}_0(\mathbf{P}_{k \to 0})}\right] \leq \frac{K}{K_0} \;\text{ for all }\; k \in \mathcal{K}.
\label{eq:DM}
\end{equation}
The above condition, of course, implies~\eqref{eq:BR}. 

Returning to the statement of Theorem~\ref{theorem:BR},
and specifically the compound e-value 
condition~\eqref{eq:BR}, we emphasize that compound e-values (albeit not called that) have played a fundamental role in null-proportion adaptive multiple testing from early on. However, it is more interesting to consider what we have gained by reinterpreting adaptive BH as ep-BH with compound e-values. For one, as we will see below, it leads to simpler proofs of FDR control since directly checking the condition~\eqref{eq:BR} is easier than checking the stronger result in~\eqref{eq:DM}. For instance the proof of~\citet{dohler2023unified} relies on properties of the convex order, while our proof is more elementary. Moreover, the ep-BH perspective directly suggests new multiple testing procedures.

As a first result of new procedures, we show given any adaptive p-BH procedure that satisfies the conditions of Theorem~\ref{theorem:BR}, we can construct a procedure that dominates it. This is to say, that for p-independent p-values, null proportion adaptive procedures as considered in Theorem~\ref{theorem:BR} are not admissible. The improved procedure is shown in Algorithm~\ref{algo:epbh_null_prop}.

\begin{algorithm}
\caption{Improved adaptive p-BH procedure with $\hat{\pi}_0$ via ep-BH}
\begin{algorithmic}[1]
\Require p-values $\mathbf{P}$, nominal level $\alpha$, null proportion estimator $\hat{\pi}_0$
\State For each $k \in \mathcal{K}$, compute $E_k = 1/\hat{\pi}_0(\mathbf{P}_{k \to 0})$.
\State Apply ep-BH at level $\alpha$ with p-values $\mathbf{P}$ and compound e-values $\mathbf{E}=(E_1,\ldots,E_K)$. 
\end{algorithmic}
\label{algo:epbh_null_prop}
\end{algorithm}

We have the following theorem.

\begin{theorem}
\label{proposition:generic_loo_adaptive}
Let $P_1,\ldots,P_K$ be p-independent p-values and
$\hat \pi_0$ be a null-proportion estimator such that Assumption~\ref{assumption:loo_compound} holds. Then, Algorithm~\ref{algo:epbh_null_prop}, that is, the ep-BH procedure with p-values $P_1,\ldots,P_K$ and compound e-values $E_1,\dots,E_K$ (from Assumption~\ref{assumption:loo_compound})
satisfies $\mathrm{FDR} \leq \alpha$, and it always makes at least as many discoveries as Algorithm~\ref{algo:usual_null_prop}.
\end{theorem}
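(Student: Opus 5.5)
The proof splits into two independent parts: FDR control, obtained by invoking Theorem~\ref{theorem:loo_epbh} directly, and a pointwise comparison of discoveries, obtained from monotonicity of $\hat{\pi}_0$ together with the fact that p-BH is monotone in its thresholds.

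\emph{FDR control.} Take $U$ constant and set
\[
E_k = g_k(\mathbf{P}) := 1/\hat{\pi}_0(\mathbf{P}_{k \to 0}), \qquad \tilde{E}_k := E_k .
\]
The function $\tilde{h}_k(\mathbf{P}_{-k}) = 1/\hat{\pi}_0(\mathbf{P}_{k\to 0})$ does not depend on $P_k$, so $E_k \le \tilde{E}_k = \tilde{h}_k(\mathbf{P}_{-k})$ holds with equality. Because $\hat{\pi}_0$ is coordinate-wise non-decreasing, $g_k$ is coordinate-wise non-increasing in $\mathbf{P}$. By Assumption~\ref{assumption:loo_compound}, the $\tilde{E}_k$ are compound e-values. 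Hence condition ($\ast$) holds, and Theorem~\ref{theorem:loo_epbh} gives $\mathrm{FDR}\le\alpha$. This is even simpler than the proof of Theorem~\ref{theorem:BR}, since no inequality between $E_k$ and $\tilde{E}_k$ is needed.

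\emph{Domination.} Write Algorithm~\ref{algo:usual_null_prop} as ep-BH with the constant weights $E_k' = 1/\hat{\pi}_0(\mathbf{P})$, as in~\eqref{eq:null_prop_adaptivity}. Since $\mathbf{P}_{k\to0} \le \mathbf{P}$ coordinate-wise and $\hat{\pi}_0$ is non-decreasing, we get $E_k \ge E_k'$, and therefore $Q_k = P_k/E_k \le P_k/E_k' = Q_k'$ for every $k$. Two cases need care.
\begin{itemize}
\item If $E_k' = \infty$ (i.e.\ $\hat{\pi}_0(\mathbf{P}) = 0$), then $E_k = \infty$ as well, since $\hat{\pi}_0(\mathbf{P}_{k\to0}) \le \hat{\pi}_0(\mathbf{P}) = 0$.
\item If $P_k = 0$, both quotients equal $0$ under the convention $0/0=0$.
\end{itemize}

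It then remains to use that the BH step-up cutoff is monotone. If $Q_k \le Q_k'$ for all $k$, then the order statistics satisfy $Q_{(j)} \le Q'_{(j)}$ for every $j$. Consequently $KQ'_{(j)}/j \le \alpha$ implies $KQ_{(j)}/j \le \alpha$, which gives $k^*_\alpha \ge k'^*_\alpha$.

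For the set-level claim, every hypothesis rejected by the original procedure has $Q_k' \le \alpha k'^*_\alpha/K$. It therefore has $Q_k \le \alpha k'^*_\alpha/K \le \alpha k^*_\alpha/K$. In ep-BH, the rejection set is exactly $\{k: Q_k \le \alpha k^*_\alpha /K\}$ (modulo ties, which are handled by this threshold form), so the original rejection set is contained in the new one. This is a stronger conclusion than ``at least as many discoveries.''

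The only step that could cause trouble is this domination bookkeeping, namely the edge conventions for $0/0$, infinite weights, and ties in the order statistics. I would handle it by rewriting both procedures in the threshold form $\{k : Q_k \le \alpha k^*/K\}$ before comparing them.
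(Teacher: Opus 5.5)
Your proposal is correct and follows the paper's proof. FDR control comes from Theorem~\ref{theorem:loo_epbh} with $\tilde{E}_k = E_k = 1/\hat{\pi}_0(\mathbf{P}_{k\to 0})$, and domination comes from the pointwise inequality $1/\hat{\pi}_0(\mathbf{P}_{k\to 0}) \geq 1/\hat{\pi}_0(\mathbf{P})$ combined with monotonicity of the step-up rule. Your threshold-form bookkeeping, which gives inclusion of the rejection sets rather than only a comparison of counts, fills in details the paper leaves implicit.
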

Remarkably, this result even (minimally) improves 
upon Storey's procedure by using the following compound e-values instead of~\eqref{eq:ek_storey}.
\begin{equation}
E_k^{\mathrm{Storey+}} := \frac{K(1-\tau)}{1 + \sum_{\ell \neq k} \id_{\{ P_{\ell} > \tau\}}}.
\label{eq:loo_storey}
\end{equation}
Here and elsewhere, the sign $+$ denotes a new procedure that uniformly improves an existing method. Storey+ could make more discoveries than Storey's procedure in Example~\ref{example:storey} in case a hypothesis with $P_k > \tau$ is rejected; see Figure~\ref{fig:regions_alpha04_tau02}. (If we enforce $\tau$-censoring of Remark~\ref{remark:tau-censoring} as in the original paper of~\citet{storey2004strong}, then there is no power-gain to be had, but luckily we do not need to.) However, we acknowledge that power gains would be very small.
\begin{figure}
\centering
\begin{tabular}{cc}
  \small\textbf{(a)} Standard Storey ep-BH, $\alpha=0.4,\ \tau=0.3$ &
  \small\textbf{(b)}  Storey+ (LOO) ep-BH, $\alpha=0.4,\ \tau=0.3$ \\
  \includegraphics[width=0.48\linewidth]{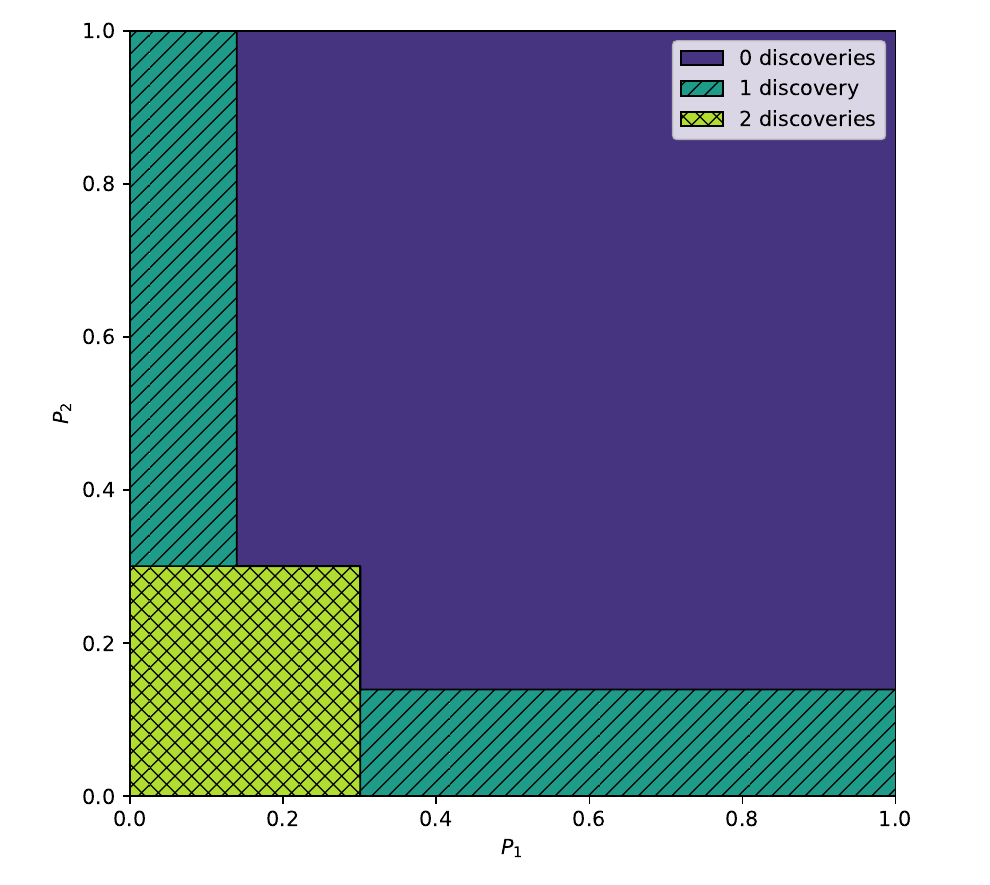} &
  \includegraphics[width=0.48\linewidth]{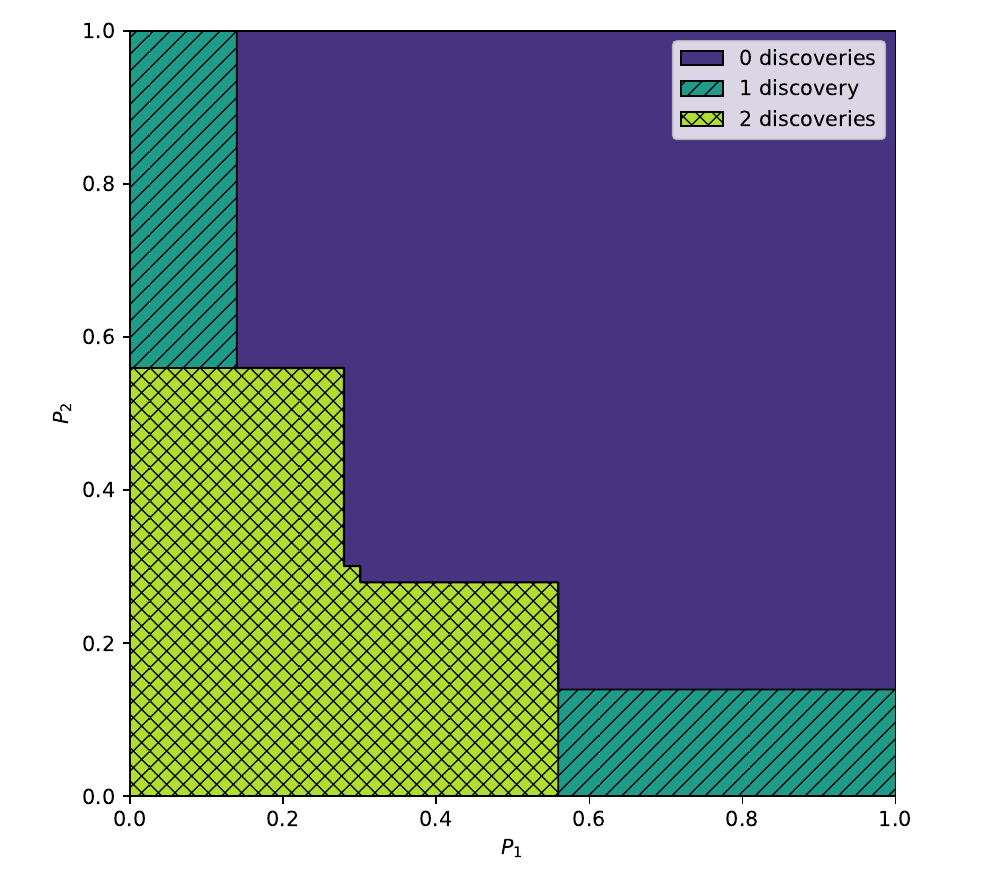}
\end{tabular}

\caption{Rejection regions for two hypotheses ($K=2$). Colors and hatching indicate 0/1/2 discoveries. Our proposed Storey+ dominates standard Storey tuned with same parameters $\alpha$ and $\tau$.  Fig.~\ref{fig:regions_alpha04_tau04} shows an analogous plot for $\alpha=\tau=0.4$.
}
\label{fig:regions_alpha04_tau02}
\end{figure}

We briefly mention three more null proportion estimators, Quant, IBHlog, and Min-Storey, that can be improved via Algorithm~\ref{algo:epbh_null_prop}. Their uniform improvements are called Quant+, IBHlog+, and Min-Storey+.

\begin{example}[Quant]
\label{example:quant}
Fix $L \in \{1,\ldots,K\}$. The quantile adaptive procedure of~\citet{benjamini2006adaptive} is equivalent to ep-BH with
\begin{equation}
\label{eq:ek_quantile}
E_k^{\mathrm{Quant}} := \frac{K(1-P_{(L)})}{K-L+1}.
\end{equation}
The implied null proportion estimator satisfies Assumption~\ref{assumption:loo_compound} for p-independent p-values, see e.g.,~\citet[Section 6.1.5.]{blanchard2009adaptive}.
\end{example}

\begin{example}[IBHlog] 
\label{example:IBHlog}
The IBHlog procedure of~\citet{zeisel2011fdr} is equivalent to ep-BH with
\begin{equation}
\label{eq:Ek_IBHlog}
E_k^{\mathrm{IBHlog}} := \frac{K}{2-\sum_{\ell \in \mathcal{K}} \log(1-P_{\ell})}.
\end{equation}
The implied null proportion estimator satisfies Assumption~\ref{assumption:loo_compound} for p-independent p-values.
(Note that IBHlog is not a special case of DM since $-\log(1-p)$ is unbounded.)
\end{example}

\begin{example}[Min-Storey] Fix $\epsilon, \ubar{\pi}_0 \in (0,1)$. The Min-Storey procedure of~\citet{zijun2026minstorey} is equivalent to ep-BH with
\begin{equation}
\label{eq:Ek_MS}
E_k^{\text{Min-Storey}} :=   \left(\max\left\{\ubar{\pi}_0,\;\left(C(K, \epsilon, \ubar{\pi}_0) \inf_{\tau \in [0,1-\epsilon]}   \frac{\max\{1,\, \sum_{\ell \in \mathcal{K}}  \id_{\{ P_{\ell} > \tau\}}\}}{K(1-\tau)}\right)\right\}\right)^{-1},
\end{equation}
where $C(K, \epsilon, \ubar{\pi}_0)$ is a constant (given in~\citet[Section 4.3]{zijun2026minstorey}) such that 
Assumption~\ref{assumption:loo_compound} holds for p-independent p-values. 
\citet{zijun2026minstorey} show that Min-Storey is nearly optimal among adaptive p-BH procedures when the null distribution is super-uniform and the alternative is arbitrary (along with some additional regularity conditions).  
\end{example}

We next use our perspective to strengthen the unified adaptive procedure of Example~\ref{example:dohler_meah} even further compared to the strengthening in Theorem~\ref{proposition:generic_loo_adaptive}.

\begin{proposition}[DM+]
Let $P_1,\ldots,P_K$ be p-independent p-values. Define $E_k$ as:
$$
E_k^{\mathrm{DM+}} :=  \frac{K}{(1/\nu_{k}) + \sum_{\ell \neq k} \psi_{\ell}(P_{\ell}) /\nu_{\ell}},
$$
where $\nu_k, \psi_k$ are defined as in Example~\ref{example:dohler_meah}.
Then $E_1,\ldots,E_K$ are compound e-values that satisfy condition ($\ast$) of Theorem~\ref{theorem:loo_epbh} and so the resulting ep-BH controls the FDR. This procedure always makes at least as many discoveries as the procedure of~\citet{dohler2023unified} and its leave-one-out modification in Algorithm~\ref{algo:epbh_null_prop}.
\label{proposition:DM+}
\end{proposition}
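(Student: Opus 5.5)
Our plan is to apply Theorem~\ref{theorem:loo_epbh} with no auxiliary variable $U$ and with $\tilde{E}_k := E_k^{\mathrm{DM+}}$. Three things need checking:
\begin{enumerate}[label=(\roman*)]
\item $E_k^{\mathrm{DM+}} = h_k(\mathbf{P}_{-k})$ is coordinate-wise non-increasing in $\mathbf{P}_{-k}$;
\item $E_1^{\mathrm{DM+}},\ldots,E_K^{\mathrm{DM+}}$ are compound e-values;
\item the dominance over DM and over its leave-one-out version from Algorithm~\ref{algo:epbh_null_prop}.
\end{enumerate}
Claim (i) is immediate: each $\psi_\ell$ is non-decreasing and each $\nu_\ell>0$, so the denominator is non-decreasing in every $P_\ell$ with $\ell \neq k$. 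Hence condition ($\ast$) holds with $g_k = \tilde h_k = h_k$, which does not depend on $P_k$.

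For (ii), fix $k \in \mathcal{N}$ and write $S_{-k} := \sum_{\ell\neq k}\psi_\ell(P_\ell)/\nu_\ell$. By p-independence, $P_k$ is independent of $S_{-k}$. The key trick is to reinsert $P_k$ using $\E[\psi_k(P_k)/\nu_k]\le 1$. This bound holds because $P_k$ is super-uniform and $\psi_k$ is non-decreasing and bounded by $1$, so $\E[\psi_k(P_k)] \ge \int_0^1\psi_k = \nu_k$. That is the wrong direction for an upper bound, so we instead exploit convexity. The map $x\mapsto 1/(a+x)$ is convex, hence $\E[1/(a + \psi_k(P_k)/\nu_k)] \ge 1/(a+\E[\psi_k(P_k)/\nu_k])$, which again goes the wrong way for a direct substitution. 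So we will not substitute. Instead we aim for the symmetric identity
\[
\sum_{k\in\mathcal N}\frac{\psi_k(P_k)/\nu_k}{S} \le 1, \qquad S := \sum_{\ell}\psi_\ell(P_\ell)/\nu_\ell,
\]
and pass from $\psi_k(P_k)/\nu_k$ in the numerator to a constant $1$ via independence and $\E[\psi_k(P_k)]\ge\nu_k$.

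Concretely, with $A_k := 1/\nu_k + S_{-k} \ge 1/\nu_k$, we want $1/A_k \le \E\bigl[\frac{\psi_k(P_k)/\nu_k + c_k}{A_k + \psi_k(P_k)/\nu_k}\,\big|\,\mathbf{P}_{-k}\bigr]$ for a suitable correction $c_k$. The cleanest route is to compare with $P_k$ replaced by a uniform $U_k$: monotonicity of $u \mapsto \psi_k(u)/(a+\psi_k(u)/\nu_k)$ plus super-uniformity should give $\E[\psi_k(P_k)/(\nu_k(a+\psi_k(P_k)/\nu_k))] \ge \E[\,\cdot\,]$ evaluated at $U_k$. The inequality $1/(a+1/\nu_k) \cdot(\ldots)$ then has to be handled by the elementary bound: for $X=\psi_k(U)/\nu_k \in[0,1/\nu_k]$ with $\E X=1$ and $a\ge0$, we have $\E[X/(a+X)] \ge 1/(a+1/\nu_k)$, since $x/(a+x) \ge x/(a+1/\nu_k)$ on that range.

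The actual argument is then a sum over $k\in\mathcal N$:
\[
\E\bigl[E_k^{\mathrm{DM+}}\bigr] = K\,\E\bigl[1/(1/\nu_k+S_{-k})\bigr] \le K\,\E\Bigl[\frac{\psi_k(P_k)/\nu_k}{S_{-k}+\psi_k(P_k)/\nu_k}\Bigr] = K\,\E\Bigl[\frac{\psi_k(P_k)/\nu_k}{S}\Bigr],
\]
using $a=S_{-k}$ conditionally. Summing over $k\in\mathcal N$ bounds the total by $K\,\E[\sum_{k\in\mathcal N}(\psi_k(P_k)/\nu_k)/S]\le K$, with the convention $0/0=0$ handled since the $k$-th term vanishes when $\psi_k(P_k)=0$. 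The main obstacle is making this conditional step rigorous: replacing $P_k$ by a uniform requires $x\mapsto x/(a+x)$ to be increasing, which it is, so $\E[g(\psi_k(P_k))]\ge \E[g(\psi_k(U))]$ by stochastic dominance.

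For (iii), note that $1/\nu_k \le \max_\ell 1/\nu_\ell$ and that $\psi_k(P_k)/\nu_k \ge 0$. Therefore $E_k^{\mathrm{DM+}} \ge K/(\max_\ell(1/\nu_\ell) + \sum_{\ell\neq k}\psi_\ell(P_\ell)/\nu_\ell + \psi_k(0)/\nu_k)$, which is the leave-one-out value $1/\hat\pi_0(\mathbf{P}_{k\to0})$, and this in turn is $\ge E_k^{\mathrm{DM}}$. Since ep-BH is monotone in the e-values, it rejects at least as much, which gives the dominance.
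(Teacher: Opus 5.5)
Your proof is correct and is essentially the paper's argument, namely the DM+ case of the W-DM+ proof with $w_k\equiv 1$: condition on $\mathbf{P}_{-k}$, use independence and super-uniformity to get $\E[\psi_k(P_k)]/\nu_k\ge 1$, use $\psi_k\le 1$ to replace $1/\nu_k$ by $\psi_k(P_k)/\nu_k$ in the denominator, and sum the self-normalized ratios over $k\in\mathcal{N}$; your monotonicity and dominance checks are also as needed. The only difference is cosmetic (you apply stochastic dominance to the composite map $u\mapsto \frac{\psi_k(u)/\nu_k}{a+\psi_k(u)/\nu_k}$ rather than to $\psi_k$ alone), but your exploratory remarks are off and should be deleted: the stated bound $\E[\psi_k(P_k)/\nu_k]\le 1$ is false in general, and $\E[\psi_k(P_k)]\ge\nu_k$ is exactly the right direction, since $1/A\le \E[\psi_k(P_k)/\nu_k]/A$ whenever $A$ is independent of $P_k$.
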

We note that when $\nu_k$ is heterogeneous for different $k$, then the above construction can increase power even further over the modification in Theorem~\ref{proposition:generic_loo_adaptive}.

We end this section with two more procedures that can be written as ep-BH procedures. To this end, it will be useful to introduce the notation $R_{\mathrm{BH},\alpha}(\mathbf{P})$ for the number of rejections of p-BH at level $\alpha$ applied to $\mathbf P$. 

\begin{example}[Minimally adaptive BH, MABH]
\label{example:mabh}
The MABH procedure of~\citet{solari2017minimally} first runs p-BH on $\mathbf{P}$. If p-BH makes no rejections, then MABH also makes no rejections. Otherwise, MABH returns the number of rejections of p-BH applies at level $\alpha (K/K-1)$. MABH may be represented as ep-BH with the following choice of compound e-values,
\begin{equation}
E_k^{\mathrm{MABH}} :=   \frac{K}{K-1} \id_{\{ R_{\mathrm{BH},\alpha}(\mathbf{P})>0\}},
\label{eq:mabh}
\end{equation}
provided that $\alpha \leq (K-1)/K$.
We note that MABH controls FDR also for PRDS p-values. We do not know if MABH can be further dominated.
\end{example}

\begin{example}[Two-stage linear step-up procedure, TST]
The TST procedure of~\citet{benjamini2006adaptive} estimates $\hat{\pi}_0 = (K-R_{\mathrm{BH},\alpha'}(\mathbf{P}))/K$ for $\alpha' = \alpha/(1+\alpha)$ and then runs p-BH at level $\alpha' / \hat{\pi}_0$. The TST controls FDR with p-independent p-values and it can be written as ep-BH (at level $\alpha$) with the choice of compound e-values
\begin{equation}
E_k^{\mathrm{TST}} :=  \frac{1}{1+\alpha}\frac{K}{K-R_{\mathrm{BH},\alpha'}(\mathbf{P}_{k \to 1})}.
\label{eq:E_k_tst}
\end{equation}
Moreover, we can dominate upon TST using the following compound e-values,
\begin{equation*}
E_k^{\mathrm{TST+}} :=  \frac{K+\alpha}{K}\frac{1}{1+\alpha}\frac{K}{K-R_{\mathrm{BH},\alpha'}(\mathbf{P}_{k \to 1})}.
\end{equation*}
\label{example:TST}
\end{example}

\subsection{Heterogeneous weighted null-proportion adaptivity}
\label{subsec:wtd_adaptivity}

We now assume that for each hypothesis we have a deterministic weight $w_k \in [0,K]$ such that
$\sum_{k \in \mathcal{K}} w_k = K$. A weight $w_k>1$ indicates that we would like to prioritize the $k$-th hypothesis and vice versa.

\begin{example}[W-Max-Storey]
\label{example:max_storey}
Fix $\tau \in (0,1)$. The weighted null-proportion adaptive procedure of~\citet{ramdas2019unified}
is equivalent to ep-BH with
\begin{equation}
E_k^{\mathrm{WMaxStorey}} := \frac{Kw_k(1-\tau)}{\max_{\ell \in \mathcal{K}} w_{\ell} + \sum_{\ell \in \mathcal{K}} w_{\ell}\id_{\{ P_{\ell} > \tau\}}}.
\label{eq:ek_weighted_storey}
\end{equation}

\label{example:weighted_storey}
\end{example}
\citet{ramdas2019unified} prove that their procedure controls the FDR. One way to prove this result using our framework is by using Theorem~\ref{theorem:loo_epbh} along with condition ($\ast$). Meanwhile, for the dominating leave-one-out compound e-values, we can use the weighted Storey procedure of~\citet{zhao2024censored}.

\begin{example}[W-LOO-Storey+]
\label{loo-storey}
Fix $\tau \in (0,1)$. We define the W-LOO-Storey+ procedure as the ep-BH procedure with
\begin{equation}
E_k^{\mathrm{WStorey+}} := \frac{Kw_k(1-\tau)}{w_k + \sum_{\ell \neq k} w_{\ell}\id_{\{ P_{\ell} > \tau\}}}.
\label{eq:ek_weighted_storey_loo}
\end{equation}
\label{example:loo_weighted_storey}
\end{example}
The null-proportion adaptive procedure of~\citet{zhao2024censored} takes the above form with the additional constraint that no hypothesis with $P_k > \tau$ is allowed to be rejected. We are able to omit this constraint; see Remark~\ref{remark:tau-censoring}. Although not using the terminology of compound e-values, the results of~\citet{zhao2024censored} are closely aligned with the spirit of this paper. They explicitly note the importance of working with compound e-values (their Assumption~3 effectively states the definition of compound e-values), they realize that is easier to prove FDR control by checking the compound e-value property, and they recognize that adaptive procedures can be strengthened if the null-proportion estimator is different for each hypothesis $k$. Our main improvement of them is realizing that p-values larger than $\tau$ can still be rejected. This  difference can be practically important in the weighted setting: if $w_k$ is very large, $P_k / w_k$ could be very small, even though $P_k > \tau$.

Below we use our framework to provide guarantees for a new procedure that is a hybrid of the DM procedure in Example~\ref{example:dohler_meah} and W-LOO-Storey+.

\begin{proposition}[W-DM+]
Let $P_1,\ldots,P_K$ be p-independent p-values. Define $E_k$ as:
\begin{equation}
E_k^{\mathrm{WDM+}} :=  \frac{Kw_k}{(w_k/\nu_{k}) + \sum_{\ell \neq k} \psi_{\ell}(P_{\ell}) (w_{\ell} /\nu_{\ell})},
\label{eq:ek_wdm+}
\end{equation}
where $\nu_k, \psi_k$ are defined as in Example~\ref{example:dohler_meah}.
Then $E_1,\ldots,E_K$ are compound e-values that satisfy condition ($\ast$) of Theorem~\ref{theorem:loo_epbh} and so the resulting ep-BH controls the $\mathrm{FDR}$. 
\label{proposition:dwm+}
\end{proposition}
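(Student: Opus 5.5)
Both claims follow from Theorem~\ref{theorem:loo_epbh}, so we check condition ($\ast$) with $\tilde E_k = E_k^{\mathrm{WDM+}}$ and then show these are compound e-values.

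\textbf{Condition ($\ast$).} Take $U$ constant. Then $E_k^{\mathrm{WDM+}} = h_k(\mathbf P_{-k})$ with
\[
h_k(\mathbf p_{-k}) = \frac{K w_k}{w_k/\nu_k + \sum_{\ell\neq k} \psi_\ell(p_\ell)\, w_\ell/\nu_\ell}.
\]
Each $\psi_\ell$ is non-decreasing and $w_\ell/\nu_\ell\ge 0$, so the denominator is coordinate-wise non-decreasing and $h_k$ is coordinate-wise non-increasing. If $w_k=0$ we set $E_k=0$; the denominator is then still positive provided some other weight is positive, and the ratio is $0$ anyway. So ($\ast$) holds with $g_k=\tilde h_k=h_k$, and Theorem~\ref{theorem:loo_epbh} gives FDR control once the compound e-value property is shown.

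\textbf{Compound e-value property.} Fix $k\in\mathcal N$ with $w_k>0$. By p-independence, $(P_\ell)_{\ell\in\mathcal N}$ are mutually independent and independent of the non-nulls. Set $a_\ell = w_\ell/\nu_\ell$ and bound the denominator below by dropping the non-null terms:
\[
D_k := a_k + \sum_{\ell\neq k}\psi_\ell(P_\ell)a_\ell \;\ge\; a_k + \sum_{\ell\in\mathcal N\setminus\{k\}} a_\ell\,\psi_\ell(P_\ell).
\]
Define $S=\sum_{\ell\in\mathcal N} a_\ell\,\psi_\ell(P_\ell)$ and let $P_k'$ be an independent uniform. Since $\psi_k\le 1$, we have $a_k\ge a_k\psi_k(P_k')$, so $D_k\ge S^{(k)}$, where $S^{(k)}$ is $S$ with $P_k$ replaced by $P_k'$. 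Then
\[
\sum_{k\in\mathcal N}\E[E_k] \le K\,\E\Bigl[\sum_{k\in\mathcal N} \frac{w_k}{S^{(k)}}\Bigr].
\]

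\textbf{Main obstacle: bounding $\sum_{k}\E[w_k/S^{(k)}]$ by $1$.} I would use the exchange trick that makes DM+ work. Write $w_k = \nu_k a_k = a_k\,\E[\psi_k(U)]$ for $U$ uniform, and use that a null $P_k$ is stochastically larger than uniform. Since $S^{(k)}$ does not involve $P_k$, conditioning gives
\[
\E\Bigl[\frac{a_k\psi_k(P_k)}{S^{(k)}}\Bigr] \ge \E[a_k\psi_k(U)]\,\E\Bigl[\frac{1}{S^{(k)}}\Bigr] = \E\Bigl[\frac{w_k}{S^{(k)}}\Bigr],
\]
where the inequality uses the independence of $P_k$ from $S^{(k)}$ and $\psi_k$ non-decreasing. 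Now replace the fresh draw inside $S^{(k)}$ by $P_k$ itself. Under independence, $S^{(k)}$ and the actual $S$ (built with $P_k$) are exchangeable when $P_k$ is exactly uniform. In that case
\[
\E\Bigl[\frac{a_k\psi_k(P_k)}{S^{(k)}}\Bigr] = \E\Bigl[\frac{a_k\psi_k(P_k')}{S}\Bigr].
\]
Summing over $k$ would then give at most $\E\bigl[\sum_k a_k\psi_k(P_k')/S\bigr]$, which has the right structure to be close to $1$ but is not immediately bounded by it.

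The cleaner route, which I would actually pursue, is to avoid fresh copies and prove directly
\[
\E\Bigl[\frac{w_k}{a_k + T_{-k}}\Bigr] \le \E\Bigl[\frac{a_k\psi_k(P_k)}{a_k\psi_k(P_k)+T_{-k}}\Bigr], \qquad T_{-k}=\sum_{\ell\in\mathcal N\setminus\{k\}} a_\ell\psi_\ell(P_\ell).
\]
Conditioning on $T_{-k}=t$, this reduces to the scalar inequality
\[
\nu_k\,\frac{a_k}{a_k+t} \;\le\; \E\Bigl[\frac{a_k\psi_k(P_k)}{a_k\psi_k(P_k)+t}\Bigr].
\]
Two facts give it. First, $x\mapsto x/(x+t)$ is concave with value $0$ at $x=0$, so $x/(x+t)\ge (x/a_k)\cdot a_k/(a_k+t)$ for $x\in[0,a_k]$. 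Taking $x=a_k\psi_k(P_k)$ yields a lower bound of $\E[\psi_k(P_k)]\,a_k/(a_k+t)$. Second, $\E[\psi_k(P_k)]\ge\nu_k$ by super-uniformity of null $P_k$ and monotonicity of $\psi_k$. This chord argument is the key step.

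Summing over $k\in\mathcal N$, the right-hand sides total $\E[S/S]\le 1$ (with $0/0$ terms equal to $0$). This gives $\sum_{k\in\mathcal N}\E[E_k]\le K$, so the $E_k$ are compound e-values.
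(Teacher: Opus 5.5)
Your final route is correct and is essentially the paper's proof. The conditional ``chord'' inequality bundles the paper's two steps: first, independence of $P_k$ from the other p-values together with $\E[\psi_k(P_k)]\ge\nu_k$ lets you replace $w_k$ by $\psi_k(P_k)w_k/\nu_k$ in the numerator; then you shrink $w_k/\nu_k$ to $\psi_k(P_k)w_k/\nu_k$ in the denominator. Since $x\le a_k$, that second step is all the concavity argument amounts to. Dropping the non-null terms up front is a harmless variant of the paper's final step, which bounds null mass over total mass by $1$.

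You should delete the fresh-copy detour. It is unnecessary, and it would actually fail: $S^{(k)}$ can vanish with positive probability (e.g.\ for $\psi_k(u)=\id_{\{u>\tau\}}$), which makes $\E[w_k/S^{(k)}]=\infty$.
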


\begin{proof}
By condition ($\ast$) of Theorem~\ref{theorem:loo_epbh}, it suffices to check that $E_k$ are indeed compound e-values.
Beyond simple manipulations, our arguments only uses the facts that $\psi_k(P_k) \in [0,1]$, $\mathbb E[ \psi_{k}(P_k)]/\nu_k \geq 1$ for $k \in \mathcal{N}$ and p-independence:
$$
\begin{aligned}
\sum_{k \in \mathcal{N}} \mathbb E[E_k] &= K\sum_{k \in \mathcal{N}} \mathbb E\left[\frac{w_k}{(w_k/\nu_k) + \sum_{\ell \neq k} \psi_{\ell}(P_{\ell}) (w_{\ell}/\nu_{\ell})}  \right]\\ 
&\leq K\sum_{k \in \mathcal{N}}  \mathbb E\left[\frac{ \psi_k(P_k) (w_k / \nu_k)}{(w_k/\nu_k) + \sum_{\ell \neq k} \psi_{\ell}(P_{\ell}) (w_{\ell}/\nu_{\ell})}  \right] \\ 
&\leq K\sum_{k \in \mathcal{N}}  \mathbb E\left[\frac{ \psi_k(P_k) (w_k / \nu_k)}{\psi_k(P_k)(w_k/\nu_k) + \sum_{\ell \neq k} \psi_{\ell}(P_{\ell}) (w_{\ell}/\nu_{\ell})}  \right] \\ 
&=  K\mathbb E\left[\frac{\sum_{k \in \mathcal{N}}\psi_k(P_k)(w_k/\nu_k)}{\sum_{\ell \in \mathcal{K}} \psi_{\ell}(P_{\ell}) (w_{\ell}/\nu_{\ell})} \right]  \leq K,
\end{aligned}
$$
as desired.
\end{proof}
This result specializes to W-LOO-Storey+ for the choice $\psi_k(u) = \id_{\{ u > \tau\}}$.
The above examples of improvements of existing procedures are far from comprehensive, for example, null-proportion adaptivity is also used in other algorithms such as the p-filter~\citep{ramdas2019unified} (in a manner consistent with compound e-variables) and thus the uniform improvements would thus be inherited by many such downstream procedures.

\section{Combining null proportion estimators}
\label{sec:combining}
In this section we study adaptive BH procedures that combine two null proportion estimators. Specifically,
suppose we have access to two null proportion estimators, $\hat{\pi}_0^{\mathrm{A}}$ and $\hat{\pi}_0^{\mathrm{B}}$.~\citet{dohler2023unified} propose to combine these estimators by taking a convex combination, 
\begin{equation} 
\hat{\pi}_0^{\lambda} := \lambda \hat{\pi}_0^{\mathrm{A}} + (1-\lambda)\hat{\pi}_0^{\mathrm{B}},\,\; \lambda \in (0,1),
\label{eq:adaptive_null_prop}
\end{equation}
and then to apply the adaptive BH procedure with $\hat{\pi}_0^{\lambda}$ as the estimate of $\pi_0$. This combination strategy is presented in Algorithm~\ref{algo:basic_combination}:

\begin{algorithm}[H]
\caption{Adaptive p-BH procedure combining $\hat{\pi}_0^{\mathrm{A}}$ and $\hat{\pi}_0^{\mathrm{B}}$}
\begin{algorithmic}[1]
\Require p-values $\mathbf{P}$, nominal level $\alpha$, null proportion estimators $\hat{\pi}_0^{\mathrm{A}}$ , $\hat{\pi}_0^{\mathrm{B}}$, $\lambda \in (0,1)$
\State Compute $\hat{\pi}_0^{\mathrm{A}} \equiv \hat{\pi}_0^{\mathrm{A}}(\mathbf{P})$ and
$\hat{\pi}_0^{\mathrm{B}} \equiv \hat{\pi}_0^{\mathrm{B}}(\mathbf{P})$.
\State Let $\hat{\pi}_0^{\lambda} := \lambda \hat{\pi}_0^{\mathrm{A}} + (1-\lambda)\hat{\pi}_0^{\mathrm{B}}$ as in~\eqref{eq:adaptive_null_prop}.
\State Apply p-BH at level $\alpha / \hat{\pi}_0^{\lambda}$ with p-values $\mathbf{P}$.
\end{algorithmic}
\label{algo:basic_combination}
\end{algorithm}
As in our general framework, we can interpret Algorithm~\ref{algo:basic_combination} as an instantiation of ep-BH with compound e-values,
\begin{equation}
E_k := \frac{1}{\hat{\pi}_0^{\lambda}} = \frac{1}{\lambda \hat{\pi}_0^{\mathrm{A}} + (1-\lambda)\hat{\pi}_0^{\mathrm{B}}}.
\label{eq:compound_evals_combi}
\end{equation}
We have the following guarantee on Algorithm~\ref{algo:basic_combination}.

\begin{proposition}[Generalization of Proposition 3.3\ in~\citet{dohler2023unified}]
Let $P_1,\ldots,P_K$ be p-independent p-values and
$\hat{\pi}_0^{\mathrm{A}}$, $\hat{\pi}_0^{\mathrm{B}}$ be two null-proportion estimators such that Assumption~\ref{assumption:loo_compound} holds.
Then the procedure of Algorithm~\ref{algo:basic_combination} 
satisfies $\mathrm{FDR} \leq \alpha$.
\end{proposition}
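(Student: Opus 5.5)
We will deduce this from Theorem~\ref{theorem:BR} by checking that the combined estimator $\hat{\pi}_0^{\lambda}$ of~\eqref{eq:adaptive_null_prop} itself satisfies Assumption~\ref{assumption:loo_compound}. Once this is done, Theorem~\ref{theorem:BR}, equivalently Theorem~\ref{theorem:loo_epbh} applied with $\tilde{h}_k(\mathbf{P}_{-k}) := 1/\hat{\pi}_0^{\lambda}(\mathbf{P}_{k\to 0})$, gives $\mathrm{FDR}\le\alpha$ for Algorithm~\ref{algo:basic_combination}. The assumption has two parts, monotonicity and the compound e-value condition~\eqref{eq:BR}, and we check them in turn.

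\emph{Monotonicity.} Since $\hat{\pi}_0^{\mathrm{A}}$ and $\hat{\pi}_0^{\mathrm{B}}$ are each coordinate-wise non-decreasing in $\mathbf{P}$, so is any convex combination of them with weights $\lambda,1-\lambda>0$. This part is immediate.

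\emph{The compound condition~\eqref{eq:BR}.} This is the main step. Fix $k$ and write $a_k := \hat{\pi}_0^{\mathrm{A}}(\mathbf{P}_{k\to0})$ and $b_k := \hat{\pi}_0^{\mathrm{B}}(\mathbf{P}_{k\to0})$, both positive. The map $(a,b)\mapsto 1/(\lambda a+(1-\lambda)b)$ is bounded by the weighted harmonic-mean inequality: by convexity of $x\mapsto 1/x$ on $(0,\infty)$,
\begin{equation*}
\frac{1}{\lambda a_k + (1-\lambda) b_k} \;\le\; \frac{\lambda}{a_k} + \frac{1-\lambda}{b_k}.
\end{equation*}
We apply this pointwise, take expectations, and sum over $k\in\mathcal N$. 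By Assumption~\ref{assumption:loo_compound} for each of the two estimators,
\begin{equation*}
\sum_{k\in\mathcal N}\mathbb E\left[\frac{1}{\hat{\pi}_0^{\lambda}(\mathbf{P}_{k\to0})}\right] \;\le\; \lambda\sum_{k\in\mathcal N}\mathbb E\left[\frac{1}{a_k}\right] + (1-\lambda)\sum_{k\in\mathcal N}\mathbb E\left[\frac{1}{b_k}\right] \;\le\; \lambda K + (1-\lambda)K \;=\; K.
\end{equation*}
This is exactly~\eqref{eq:BR} for $\hat{\pi}_0^{\lambda}$.

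The only point needing care is the degenerate case where an estimator vanishes, so that some $1/a_k=\infty$. The convexity bound still holds in $[0,\infty]$, and in that case the corresponding expectation in the assumption would already be infinite, which is excluded. It is also worth emphasizing why this argument is simpler than the original. We use only the compound condition~\eqref{eq:BR}, not the per-hypothesis bound~\eqref{eq:DM} that \citet{dohler2023unified} rely on, and that weaker input is why the result is a generalization of their Proposition~3.3.
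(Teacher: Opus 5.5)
Your proof is correct and follows the paper's own argument: convexity of $x\mapsto 1/x$ bounds $1/\hat{\pi}_0^{\lambda}(\mathbf{P}_{k\to 0})$ by $\lambda/\hat{\pi}_0^{\mathrm{A}}(\mathbf{P}_{k\to 0})+(1-\lambda)/\hat{\pi}_0^{\mathrm{B}}(\mathbf{P}_{k\to 0})$, which gives the compound condition~\eqref{eq:BR} for the combined estimator. The paper then checks the same coordinate-wise monotonicity and concludes via Theorem~\ref{theorem:loo_epbh}; you route this through Theorem~\ref{theorem:BR}, which the paper itself proves by appealing to Theorem~\ref{theorem:loo_epbh}.
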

We provide a short proof of the proposition using our framework since it will enable us to propose a more powerful combination strategy.
\begin{proof}
Let us consider the leave-one-out compound e-values,
\begin{equation}
E_k^{\mathrm{A}} := \frac{1}{\hat{\pi}^{\mathrm{A}}_0(\mathbf{P}_{k \to 0})},\;\;\; E_k^{\mathrm{B}} := \frac{1}{\hat{\pi}^{\mathrm{B}}_0(\mathbf{P}_{k \to 0})}.
\label{eq:loo_compound_A_B}
\end{equation}
The result of the proposition will follow if we can show that 
\begin{equation}
E_k^{\mathrm{DM},\lambda} := \frac{1}{\hat{\pi}_0^{\lambda}(\mathbf{P}_{k \to 0})}, 
\label{eq:harmonic_avg}
\end{equation}
are also compound e-values. By convexity of $x \mapsto 1/x$ on $(0,\infty)$,
\begin{equation}
E_k^{\mathrm{DM},\lambda} \leq \frac{\lambda}{ \hat{\pi}_0^{\mathrm{A}}(\mathbf{P}_{k \to 0})} + \frac{1-\lambda}{\hat{\pi}_0^{\mathrm{B}}(\mathbf{P}_{k \to 0})} = \lambda E_k^{\mathrm{A}} + (1-\lambda) E_k^{\mathrm{B}},
\label{eq:combination-evalue-upper-bound}
\end{equation}
and so indeed $E_k^{\mathrm{DM},\lambda}$ are compound e-values. We conclude by Theorem~\ref{theorem:loo_epbh} after checking that the coordinate-wise monotonicity requirement is satisfied.
\end{proof}
From \eqref{eq:combination-evalue-upper-bound} we see that there are two possible improvements over Algorithm~\ref{algo:basic_combination}. First, we can use the LOO-improvement from Algorithm~\ref{algo:epbh_null_prop}. Second, 
it is preferable (i.e., strictly better) to average the compound e-values via
\begin{equation} 
E_k^{\lambda} := \lambda E_k^{\mathrm{A}} + (1-\lambda)E_k^{\mathrm{B}},\,\; \lambda \in (0,1),
\label{eq:combine_combounds_evals}
\end{equation}
instead of averaging the null-proportion estimators. The reason is that averaging the null-proportion estimators means that we are taking the harmonic average of the implied compound e-values, which is smaller than their arithmetic average. We propose the following general procedure:
\begin{algorithm}
\caption{Improved adaptive p-BH procedure combining $\hat{\pi}_0^{\mathrm{A}}$ and $\hat{\pi}_0^{\mathrm{B}}$ via ep-BH}
\begin{algorithmic}[1]
\Require p-values $\mathbf{P}$, nominal level $\alpha$, null proportion estimators $\hat{\pi}_0^{\mathrm{A}}$ , $\hat{\pi}_0^{\mathrm{B}}$, $\lambda \in (0,1)$
\State For each $k \in \mathcal{K}$, compute $E_k^{\mathrm{A}}= 1/\hat{\pi}_0^{\mathrm{A}}(\mathbf{P}_{k\to 0})$ and $E_k^{\mathrm{B}}= 1/\hat{\pi}_0^{\mathrm{B}}(\mathbf{P}_{k\to 0})$.
\State For each $k \in \mathcal{K}$, let $E_k^{\lambda} := \lambda E_k^{\mathrm{A}} + (1-\lambda)E_k^{\mathrm{B}}$ as in~\eqref{eq:combine_combounds_evals}.
\State Apply ep-BH at level $\alpha$ with p-values $\mathbf{P}$ and compound e-values $\mathbf{E}^{\lambda}=(E_1^{\lambda},\ldots,E_K^{\lambda})$. 
\end{algorithmic}
\label{algo:epbh_combination}
\end{algorithm}

We have the following result for Algorithm~\ref{algo:epbh_combination}.
\begin{proposition}
Let $P_1,\ldots,P_K$ be p-independent p-values and
$\hat{\pi}_0^{\mathrm{A}}$, $\hat{\pi}_0^{\mathrm{B}}$ be two null-proportion estimators such that Assumption~\ref{assumption:loo_compound} holds.
Then the procedure of Algorithm~\ref{algo:epbh_combination} satisfies $\mathrm{FDR} \leq \alpha$, and it always makes at least as many discoveries as Algorithm~\ref{algo:basic_combination}.
\label{prop:new_combi}
\end{proposition}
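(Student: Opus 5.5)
Two things must be shown for Proposition~\ref{prop:new_combi}: FDR control, by way of Theorem~\ref{theorem:loo_epbh}, and domination of Algorithm~\ref{algo:basic_combination}, by a monotonicity property of ep-BH in the weights.

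\emph{FDR control.} Take $U$ constant and set $g_k = \tilde h_k$, where $\tilde h_k(\mathbf{P}_{-k}) := \lambda/\hat{\pi}_0^{\mathrm{A}}(\mathbf{P}_{k\to 0}) + (1-\lambda)/\hat{\pi}_0^{\mathrm{B}}(\mathbf{P}_{k\to 0})$. The vector $\mathbf{P}_{k\to 0}$ does not depend on $P_k$, so $E_k^\lambda$ is a function of $\mathbf{P}_{-k}$ only. Each estimator is coordinate-wise non-decreasing by Assumption~\ref{assumption:loo_compound}, so each reciprocal is non-increasing, and a nonnegative combination of the two is non-increasing as well. This gives condition ($\ast$) with $E_k = \tilde E_k = E_k^\lambda$.

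It remains to check that the $\tilde E_k$ are compound e-values. By linearity,
\[
\sum_{k\in\mathcal N}\E[E_k^\lambda] = \lambda\sum_{k\in\mathcal N}\E[E_k^{\mathrm A}] + (1-\lambda)\sum_{k\in\mathcal N}\E[E_k^{\mathrm B}] \le \lambda K + (1-\lambda)K = K,
\]
where the inequality is \eqref{eq:BR} applied to each estimator. Theorem~\ref{theorem:loo_epbh} then gives $\mathrm{FDR}\le\alpha$.

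\emph{Domination.} Algorithm~\ref{algo:basic_combination} is ep-BH with the common weight $E_k = 1/\hat\pi_0^\lambda(\mathbf P)$ from \eqref{eq:compound_evals_combi}. I will show that $E_k^\lambda \ge 1/\hat\pi_0^\lambda(\mathbf P)$ for every $k$ and every realization; this takes two steps.

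First, since $\mathbf P_{k\to 0}\le \mathbf P$ coordinate-wise and both estimators are non-decreasing, $\hat\pi_0^\lambda(\mathbf P_{k\to0}) \le \hat\pi_0^\lambda(\mathbf P)$. Hence
\[
1/\hat\pi_0^\lambda(\mathbf P) \le E_k^{\mathrm{DM},\lambda}.
\]
Second, the convexity bound \eqref{eq:combination-evalue-upper-bound} gives $E_k^{\mathrm{DM},\lambda} \le E_k^\lambda$.

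Combining the two, $Q_k^{\text{new}} = P_k/E_k^\lambda \le P_k\hat\pi_0^\lambda(\mathbf P) = Q_k^{\text{old}}$ for every $k$.

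To conclude I need a monotonicity lemma for step-up procedures: if $Q_k' \le Q_k$ for all $k$, then $k^*_\alpha(Q') \ge k^*_\alpha(Q)$, and the rejection set grows. The count follows because coordinate-wise decreases can only decrease the order statistics, i.e.\ $Q'_{(j)} \le Q_{(j)}$ for each $j$, so every $j$ in the defining set for $k^*_\alpha(Q)$ also lies in the set for $Q'$. For the set inclusion, a rejected $k$ satisfies $Q_k' \le Q_k \le Q_{(k^*)} \le \alpha k^*/K \le \alpha k'^*/K$. Also, ep-BH rejects exactly those $j$ with $KQ_j'/k'^* \le \alpha$ (up to ties, which I would handle by noting the rejection rule is equivalent to $Q_j \le \alpha k^*/K$).

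The main obstacles are this tie-handling detail and the zero/infinite edge cases, for example $\hat\pi_0 = 0$ giving $E_k = \infty$ under the convention $0/0 = 0$. I would treat these by working in the extended reals, where both the inequality $E_k^\lambda \ge 1/\hat\pi_0^\lambda(\mathbf P)$ and the monotonicity of the step-up threshold still hold. Everything else is direct.
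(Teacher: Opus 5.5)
Your proposal is correct and follows the same route as the paper. FDR control comes from Theorem~\ref{theorem:loo_epbh}, with the averaged leave-one-out e-values being compound by linearity. Domination comes from $1/\hat{\pi}_0^{\lambda}(\mathbf{P}) \le E_k^{\mathrm{DM},\lambda} \le E_k^{\lambda}$ via \eqref{eq:combination-evalue-upper-bound}, together with monotonicity of ep-BH in the weights. The only difference is that you spell out the step-up monotonicity lemma, which the paper takes for granted. Your tie concern is also moot: ties at the cutoff cannot occur, so the rejection set is exactly $\{k : Q_k \le \alpha k^*_\alpha/K\}$.
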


\section{Simultaneous t-tests and numerical results}
\label{sec:simultaneous_ttests_main}
In Supplement~\ref{sec:simultaneous_ttests}, we consider the problem of simultaneous t-tests, where for each hypothesis we observe $n$ independent normal replicates with unknown mean and variance. In this setting, we use our leave-one-out ep-BH framework to 
strengthen previous proposals of~\citet{westfall2004weighted, finos2007fdr, ignatiadis2024evalues} among other authors. We
construct compound e-values based on statistics that are independent of the null p-values (Theorem~\ref{theorem:loo_var}), yielding a new procedure (LOO-Var+) with finite-sample FDR control.
We also provide a simulation study illustrating the power gains of using LOO-Var+ and other procedures considered in this work such as Algorithm~\ref{algo:epbh_combination}.

\paragraph{Acknowledgments.} This work was completed in part with resources provided by the University of Chicago’s Research Computing Center. N.I. gratefully acknowledges support from the U.S. National Science Foundation (DMS-2443410).

\bibliographystyle{abbrvnat}
\bibliography{references}

\begin{thebibliography}{32}
\providecommand{\natexlab}[1]{#1}
\providecommand{\url}[1]{\texttt{#1}}
\expandafter\ifx\csname urlstyle\endcsname\relax
  \providecommand{\doi}[1]{doi: #1}\else
  \providecommand{\doi}{doi: \begingroup \urlstyle{rm}\Url}\fi

\bibitem[Armstrong(2022)]{armstrong2022false}
T.~B. Armstrong.
\newblock False discovery rate adjustments for average significance level controlling tests.
\newblock \emph{arXiv preprint}, arXiv:2209.13686, 2022.

\bibitem[Barber and Samworth(2025)]{barber2025false}
R.~F. Barber and R.~J. Samworth.
\newblock False discovery rate control with compound p-values.
\newblock \emph{arXiv preprint}, arXiv:2507.21465, 2025.

\bibitem[Benjamini and Hochberg(1995)]{benjamini1995controlling}
Y.~Benjamini and Y.~Hochberg.
\newblock Controlling the false discovery rate: A practical and powerful approach to multiple testing.
\newblock \emph{Journal of the Royal Statistical Society Series B}, 57\penalty0 (1):\penalty0 289--300, 1995.

\bibitem[Benjamini and Hochberg(2000)]{benjamini2000adaptive}
Y.~Benjamini and Y.~Hochberg.
\newblock On the adaptive control of the false discovery rate in multiple testing with independent statistics.
\newblock \emph{Journal of Educational and Behavioral Statistics}, 25\penalty0 (1):\penalty0 60--83, 2000.

\bibitem[Benjamini and Yekutieli(2001)]{benjamini2001control}
Y.~Benjamini and D.~Yekutieli.
\newblock The control of the false discovery rate in multiple testing under dependency.
\newblock \emph{The Annals of Statistics}, 29\penalty0 (4):\penalty0 1165--1188, 2001.

\bibitem[Benjamini et~al.(2006)Benjamini, Krieger, and Yekutieli]{benjamini2006adaptive}
Y.~Benjamini, A.~M. Krieger, and D.~Yekutieli.
\newblock Adaptive linear step-up procedures that control the false discovery rate.
\newblock \emph{Biometrika}, 93\penalty0 (3):\penalty0 491--507, 2006.

\bibitem[Blanchard and Roquain(2008)]{blanchard2008two}
G.~Blanchard and E.~Roquain.
\newblock Two simple sufficient conditions for {{FDR}} control.
\newblock \emph{Electronic Journal of Statistics}, 2:\penalty0 963--992, 2008.

\bibitem[Blanchard and Roquain(2009)]{blanchard2009adaptive}
G.~Blanchard and E.~Roquain.
\newblock Adaptive false discovery rate control under independence and dependence.
\newblock \emph{Journal of Machine Learning Research}, 10\penalty0 (97):\penalty0 2837--2871, 2009.

\bibitem[Bourgon et~al.(2010)Bourgon, Gentleman, and Huber]{bourgon2010independent}
R.~Bourgon, R.~Gentleman, and W.~Huber.
\newblock Independent filtering increases detection power for high-throughput experiments.
\newblock \emph{Proceedings of the National Academy of Sciences}, 107\penalty0 (21):\penalty0 9546--9551, 2010.

\bibitem[Cand{\`e}s et~al.(2018)Cand{\`e}s, Fan, Janson, and Lv]{candes2018panning}
E.~Cand{\`e}s, Y.~Fan, L.~Janson, and J.~Lv.
\newblock Panning for gold: `{{Model-X}}' knockoffs for high dimensional controlled variable selection.
\newblock \emph{Journal of the Royal Statistical Society Series B}, 80\penalty0 (3):\penalty0 551--577, 2018.

\bibitem[D{\"o}hler and Meah(2023)]{dohler2023unified}
S.~D{\"o}hler and I.~Meah.
\newblock A unified class of null proportion estimators with plug-in {{FDR}} control.
\newblock \emph{arXiv preprint}, arXiv:2307.13557, 2023.

\bibitem[Finner et~al.(2009)Finner, Dickhaus, and Roters]{finner2009false}
H.~Finner, T.~Dickhaus, and M.~Roters.
\newblock On the false discovery rate and an asymptotically optimal rejection curve.
\newblock \emph{The Annals of Statistics}, 37\penalty0 (2):\penalty0 596--618, 2009.

\bibitem[Finos and Salmaso(2007)]{finos2007fdr}
L.~Finos and L.~Salmaso.
\newblock {{FDR-}} and {{FWE-controlling}} methods using data-driven weights.
\newblock \emph{Journal of Statistical Planning and Inference}, 137\penalty0 (12):\penalty0 3859--3870, 2007.

\bibitem[Gao(2025)]{gao2025adaptive}
Z.~Gao.
\newblock An adaptive null proportion estimator for false discovery rate control.
\newblock \emph{Biometrika}, 112\penalty0 (1):\penalty0 asae051, 2025.

\bibitem[Gao and Roquain(2026)]{zijun2026minstorey}
Z.~Gao and E.~Roquain.
\newblock On min-{{Storey}} estimators for multiple testing and conformal novelty detection.
\newblock \emph{arXiv preprint}, arXiv:2603.17984, 2026.

\bibitem[Genovese et~al.(2006)Genovese, Roeder, and Wasserman]{genovese2006false}
C.~R. Genovese, K.~Roeder, and L.~Wasserman.
\newblock False discovery control with p-value weighting.
\newblock \emph{Biometrika}, 93\penalty0 (3):\penalty0 509--524, 2006.

\bibitem[Guo and Romano(2017)]{guo2017analysis}
W.~Guo and J.~P. Romano.
\newblock Analysis of error control in large scale two-stage multiple hypothesis testing.
\newblock \emph{arXiv preprint arXiv:1703.06336}, 2017.

\bibitem[Ignatiadis and Huber(2021)]{ignatiadis2021covariate}
N.~Ignatiadis and W.~Huber.
\newblock Covariate powered cross-weighted multiple testing.
\newblock \emph{Journal of the Royal Statistical Society Series B}, 83\penalty0 (4):\penalty0 720--751, 2021.

\bibitem[Ignatiadis et~al.(2024)Ignatiadis, Wang, and Ramdas]{ignatiadis2024evalues}
N.~Ignatiadis, R.~Wang, and A.~Ramdas.
\newblock E-values as unnormalized weights in multiple testing.
\newblock \emph{Biometrika}, 111\penalty0 (2):\penalty0 417--439, 2024.

\bibitem[Ignatiadis et~al.(2025)Ignatiadis, Wang, and Ramdas]{ignatiadis2024asymptotic}
N.~Ignatiadis, R.~Wang, and A.~Ramdas.
\newblock Asymptotic and compound e-values: multiple testing and empirical {B}ayes.
\newblock \emph{arXiv preprint}, arXiv:2409.19812, 2025.

\bibitem[Li and Barber(2019)]{li2019multiple}
A.~Li and R.~F. Barber.
\newblock Multiple testing with the structure-adaptive {{Benjamini}}--{{Hochberg}} algorithm.
\newblock \emph{Journal of the Royal Statistical Society Series B: Statistical Methodology}, 81\penalty0 (1):\penalty0 45--74, 2019.

\bibitem[Pounds and Cheng(2006)]{pounds2006robust}
S.~Pounds and C.~Cheng.
\newblock Robust estimation of the false discovery rate.
\newblock \emph{Bioinformatics}, 22\penalty0 (16):\penalty0 1979--1987, 2006.

\bibitem[Ramdas and Wang(2025)]{ramdas2025hypothesis}
A.~Ramdas and R.~Wang.
\newblock Hypothesis testing with e-values.
\newblock \emph{Foundations and Trends in Statistics}, 1\penalty0 (1-2):\penalty0 1--390, 2025.

\bibitem[Ramdas et~al.(2019)Ramdas, Barber, Wainwright, and Jordan]{ramdas2019unified}
A.~Ramdas, R.~F. Barber, M.~J. Wainwright, and M.~I. Jordan.
\newblock A unified treatment of multiple testing with prior knowledge using the p-filter.
\newblock \emph{The Annals of Statistics}, 47\penalty0 (5):\penalty0 2790--2821, 2019.

\bibitem[Ren and Barber(2024)]{ren2024derandomised}
Z.~Ren and R.~F. Barber.
\newblock Derandomised knockoffs: Leveraging {\emph{e}}-values for false discovery rate control.
\newblock \emph{Journal of the Royal Statistical Society Series B}, 86\penalty0 (1):\penalty0 122--154, 2024.

\bibitem[Sarkar(2008)]{sarkar2008methods}
S.~K. Sarkar.
\newblock On methods controlling the false discovery rate.
\newblock \emph{Sankhy{\=a} A: The Indian Journal of Statistics}, pages 135--168, 2008.

\bibitem[Solari and Goeman(2017)]{solari2017minimally}
A.~Solari and J.~J. Goeman.
\newblock Minimally adaptive {{BH}}: {{A}} tiny but uniform improvement of the procedure of {{Benjamini}} and {{Hochberg}}.
\newblock \emph{Biometrical Journal}, 59\penalty0 (4):\penalty0 776--780, 2017.

\bibitem[Storey et~al.(2004)Storey, Taylor, and Siegmund]{storey2004strong}
J.~D. Storey, J.~E. Taylor, and D.~Siegmund.
\newblock Strong control, conservative point estimation and simultaneous conservative consistency of false discovery rates: A unified approach.
\newblock \emph{Journal of the Royal Statistical Society Series B}, 66\penalty0 (1):\penalty0 187--205, 2004.

\bibitem[Wang and Ramdas(2022)]{wang2022false}
R.~Wang and A.~Ramdas.
\newblock False discovery rate control with e-values.
\newblock \emph{Journal of the Royal Statistical Society Series B}, 84\penalty0 (3):\penalty0 822--852, 2022.

\bibitem[Westfall et~al.(2004)Westfall, Kropf, and Finos]{westfall2004weighted}
P.~H. Westfall, S.~Kropf, and L.~Finos.
\newblock Weighted {{FWE-controlling}} methods in high-dimensional situations.
\newblock In \emph{Recent {{Developments}} in {{Multiple Comparison Procedures}}}, volume~47 of \emph{Institute of {{Mathematical Statistics Lecture Notes}} - {{Monograph Series}}}, pages 143--154. {Institute of Mathematical Statistics}, {Beachwood, Ohio, USA}, 2004.

\bibitem[Zeisel et~al.(2011)Zeisel, Zuk, and Domany]{zeisel2011fdr}
A.~Zeisel, O.~Zuk, and E.~Domany.
\newblock {{FDR}} control with adaptive procedures and {{FDR}} monotonicity.
\newblock \emph{The Annals of Applied Statistics}, 5\penalty0 (2A):\penalty0 943--968, 2011.

\bibitem[Zhao and Zhou(2024)]{zhao2024censored}
H.~Zhao and H.~Zhou.
\newblock $\tau$-censored weighted {{Benjamini}}--{{Hochberg}} procedures under independence.
\newblock \emph{Biometrika}, 111\penalty0 (2):\penalty0 479--496, 2024.

\end{thebibliography}

\appendix 

\setcounter{equation}{0}
\setcounter{figure}{0}
\setcounter{table}{0}

\renewcommand{\theequation}{S\arabic{equation}}
\renewcommand{\thefigure}{S\arabic{figure}}
\renewcommand{\thetable}{S\arabic{table}}

\section{The ep-BH procedure and \texorpdfstring{$\tau$}{tau}-censoring}
\label{sec:tau_censoring}

We first define the $\tau$-censored ep-BH procedure. 
\begin{definition}[The $\tau$-censored ep-BH procedure]
\label{defi:tau-weighted-pBH}
Fix $\tau \in (0,1)$.
Let $P_1,\dots, P_K$ be p-variables for the hypotheses $H_1,\dots,H_K$ and let $E_1,\dots, E_K$ be compound e-variables. 
The $\tau$-censored ep-BH procedure rejects hypothesis $H_k$ if $P_k \leq \min\{\tau,   E_k \alpha k_{\alpha}^*/K\}$, where 
\begin{equation*} 
k_\alpha^*:=\max\left\{k\in \mathcal K\,:\,  \sum_{j \in \mathcal{K}} \id_{\{P_j \leq \min\{\tau,   E_j \alpha k/K\}\}} \geq k \right\},
\end{equation*}     
with the convention $\max(\varnothing) = 0$. 
\end{definition} 
The $\tau$-censoring construction guarantees that no p-value above $\tau$ can be rejected; see Remark~\ref{remark:tau-censoring}. The formulation above appears (without being given the name above) in~\citet{li2019multiple, ignatiadis2021covariate, zhao2024censored}.

We note that $\tau$-censoring can be implemented by directly invoking the ep-BH procedure. Specifically, suppose that for all $k \in \mathcal{K}$ the following holds:
\begin{equation}
\label{eq:tau_censoring_condition}
E_k =0\, \text{ on the event  }\, \{P_k > \tau\}.
\end{equation}
Then ep-BH is equivalent to $\tau$-censored ep-BH. Moreover, if we are given p-values $\mathbf{P}$ and compound e-values $\mathbf{E}$ that do not satisfy~\eqref{eq:tau_censoring_condition}, then we can  implement $\tau$-censored ep-BH by running ep-BH on the modified compound e-values,
$$
\tilde{E}_k := E_k \cdot  \id_{\{P_k \leq \tau\}},
$$
where here we use the convention $\infty  \times 0 = 0$.

The following theorem shows FDR control with ep-BH applied to compound e-values that satisfy~\eqref{eq:tau_censoring_condition} along with an additional leave-one-out masking condition, and its result is complementary to that of Theorem~\ref{theorem:loo_epbh}.
Below, we write $\mathbf{P} \cdot \id_{\{\mathbf{P} > \tau\}} = (P_1  \id_{\{P_1 > \tau\}},\ldots, P_K  \id_{\{P_K > \tau\}})$.

\begin{theorem}
\label{theorem:loo_epbh_censoring}
Suppose that conditional on some random variable $U$ that takes values in $\mathcal{U}$, $P_1, \ldots, P_K$ are p-values satisfying p-independence. Suppose further that
\begin{description}[style=nextline]
\item[($\ast$)]
For each $k \in \mathcal{K}$, we have $E_k = g_k(\mathbf{P} \cdot \id_{\{\mathbf{P} > \tau\}}, U)  \cdot \id_{\{P_k \leq \tau\}}$ for some  $g_k \colon [0,1]^{K} \times \mathcal{U} \to [0, \infty]$, and    moreover, $\tilde{E}_1, \ldots, \tilde{E}_K$ defined via $\tilde{E}_k = g_k(\mathbf{P}_{k \to 0} \cdot \id_{\{\mathbf{P}_{k \to 0} > \tau\}}, U)$ are compound e-values.
\end{description}
Then ep-BH applied to $\mathbf{P}$ and $\mathbf{E}$ at level $\alpha$ satisfies
$
\mathrm{FDR}_{\textnormal{ep-BH}} \leq \alpha$.
If $\tilde{E}_1, \ldots, \tilde{E}_K$
are $\varepsilon$-approximate compound e-values (instead of compound e-values), then ep-BH controls $\mathrm{FDR}$ at $\alpha(1+\varepsilon)$.
\end{theorem}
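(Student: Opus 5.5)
We follow the standard leave-one-out argument for step-up procedures, adapted to the censoring structure. Write $R$ for the number of ep-BH rejections and $V$ for the number of false ones. Since rejection of $H_k$ means $P_k \le E_k \alpha R/K$, we start from
\[
\mathrm{FDR} = \sum_{k\in\mathcal N} \E\left[\frac{\id_{\{k \text{ rejected}\}}}{R\vee 1}\right] = \sum_{k \in \mathcal N}\E\left[\frac{\id_{\{P_k \le \alpha E_k R/K\}}}{R \vee 1}\right].
\]
The plan has three steps. We fix $k\in\mathcal N$, condition on $U$ and $\mathbf P_{-k}$, and replace $R$ and $E_k$ by quantities that depend only on $(\mathbf P_{-k},U)$. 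We then integrate over $P_k$ using super-uniformity. Finally, we sum over $k$ using the compound e-value property of $\tilde E$.

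\textbf{Step 1.} The key observation concerns the event $\{P_k \le \tau\}$, which we may restrict to because $E_k = 0$ otherwise. On this event, $\mathbf P\cdot\id_{\{\mathbf P>\tau\}} = \mathbf P_{k\to 0}\cdot \id_{\{\mathbf P_{k\to0}>\tau\}}$, since the $k$-th coordinate is masked to $0$ in both. Hence $E_k = \tilde E_k$ on $\{P_k\le\tau\}$. The same holds for every other $E_j$, because $E_j$ depends on $P_k$ only through the masked vector and $\id_{\{P_j\le\tau\}}$ does not involve $P_k$. So on $\{P_k \le \tau\}$ the whole vector $\mathbf E$ equals $\mathbf E^{(k)}$, the vector obtained by substituting $\mathbf P_{k\to0}$. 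This is a function of $(\mathbf P_{-k},U)$ only, so no monotonicity is needed; the masking replaces it.

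\textbf{Step 2.} Let $R_k$ denote the number of rejections of ep-BH applied to $(\mathbf P_{k\to 0},\mathbf E^{(k)})$. We use the standard self-consistency fact for step-up procedures: if $H_k$ is rejected by ep-BH on $(\mathbf P,\mathbf E^{(k)})$, then changing $P_k$ to $0$ leaves the rejection set unchanged, so $R = R_k$. Since $\mathbf E = \mathbf E^{(k)}$ on $\{P_k\le\tau\}$, we get
\[
\frac{\id_{\{P_k \le \alpha E_k R/K\}}}{R\vee1} = \frac{\id_{\{P_k\le \alpha \tilde E_k R_k/K\}}}{R_k}
\]
whenever $k$ is rejected. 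When $k$ is not rejected, the left side is $0$ and the right side is nonnegative, so in all cases the left side is at most $\id_{\{P_k\le \alpha \tilde E_k R_k/K\}}/R_k$. If $P_k > \tau$, the left side is simply $0$.

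\textbf{Step 3.} Conditional on $(\mathbf P_{-k},U)$, the variable $P_k$ is super-uniform by p-independence given $U$, while $R_k$ and $\tilde E_k$ are fixed. Therefore
\[
\E\left[\frac{\id_{\{P_k \le \alpha\tilde E_kR_k/K\}}}{R_k}\,\middle|\,\mathbf P_{-k},U\right]\le \frac{\min(1,\alpha\tilde E_k R_k/K)}{R_k}\le \frac{\alpha \tilde E_k}{K}.
\]
Taking expectations and summing over $k\in\mathcal N$ gives $\mathrm{FDR}\le \frac{\alpha}{K}\sum_{k\in\mathcal N}\E[\tilde E_k]\le\alpha$. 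With $\varepsilon$-approximate compound e-values the same sum is at most $K(1+\varepsilon)$, which gives the bound $\alpha(1+\varepsilon)$.

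\textbf{Main obstacle.} The delicate point is Step 2: proving $R=R_k$ on the event that $k$ is rejected when the weights are themselves random. This works only because, on $\{P_k\le\tau\}$, every $E_j$ is unchanged by setting $P_k$ to $0$. Given that, setting $P_k$ to $0$ only decreases $Q_k$ from a value already among the $R$ smallest. The sorted $Q$'s therefore change only in their ordering among the first $R$, so $k^*_\alpha$ is unchanged. I would write out this step carefully, including the cases $E_k=\infty$ and $0/0$.
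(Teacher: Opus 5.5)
Your proposal is correct and is essentially the paper's proof. On the event that $H_k$ is rejected one has $P_k\le\tau$, hence $E_k=\tilde E_k$ and $R(\mathbf{P},U)=R(\mathbf{P}_{k\to 0},U)$; super-uniformity of $P_k$ given $(\mathbf{P}_{-k},U)$ and the compound e-value property of $\tilde E_1,\ldots,\tilde E_K$ then finish the argument. Your Step 1 observation that the whole vector $\mathbf{E}$, not just $E_k$, is unchanged on $\{P_k\le\tau\}$ spells out the detail the paper leaves implicit when it asserts $R(\mathbf{P}_{k\to 0},U)=R(\mathbf{P},U)$.
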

Let us parse condition ($\ast$) in more detail. 
The representation $E_k = g_k(\mathbf{P} \cdot \id_{\{\mathbf{P} > \tau\}}, U)  \cdot \id_{\{P_k \leq \tau\}}$ has two implications: first,~\eqref{eq:tau_censoring_condition} holds, and second, the compound e-values $E_k$ can only depend on $U$ and the vector of masked p-values $(\mathrm{Mask}(P_j): j \in \mathcal{K})$, where
$$
\mathrm{Mask}(P_j) = \begin{cases} P_j & \text{ if } P_j > \tau\\ 
0 & \text{ if } P_j \leq \tau.
\end{cases}
$$
In other words, under condition $(\ast)$ we are not allowed to reveal the exact magnitude of p-values smaller than $\tau$ when computing $\mathbf{E}$ from $\mathbf{P}$ (and potentially $U$).

An important upshot of Theorem~\ref{theorem:loo_epbh_censoring} over Theorem~\ref{theorem:loo_epbh} is that it does not impose a global coordinate-wise monotonicity requirement. 
We note that this theorem is similar to Lemma 1 of~\citet{zhao2024censored}, which in turns builds on earlier results of~\citet{li2019multiple,    ignatiadis2021covariate}.

\section{Omitted proofs and details for null proportion estimators}
\label{sec:omitted_null_prop}

\subsection{Proof of Theorem~\ref{theorem:epbh_fdr}}

\begin{proof}
The proof is similar to the proof of~\citet[Theorem 4]{ignatiadis2024evalues}. 
Since $\mathbf{E}$ is independent of $\mathbf{P}$, 
conditional on $\mathbf{E}$, the ep-BH procedure becomes a weighted p-BH procedure with weight vector 
$\mathbf{E}$ applied to the PRDS p-values $\mathbf{P}$.
Using results on the false discovery rate of the weighted p-BH procedure (e.g., \citealp[Theorem 1]{ramdas2019unified}), 
we get  
 $$
\E\left[\frac{V_{\text{ep-BH}}}{R_{\text{ep-BH}}} \;\Big |\; \mathbf{E}\right] \le \frac{1}{K} \sum_{k\in \mathcal N} E_k \alpha.
 $$ 
Hence, by iterated expectation, $\text{FDR}_{\text{ep-BH}} \leq \E[\sum_{k\in \mathcal N} E_k \alpha/K] \leq \alpha$.  
\end{proof}

\subsection{Proof of Theorem~\ref{theorem:loo_epbh}}
\begin{proof}
Write $R=R(\mathbf {P},U)$ for the number of rejections of ep-BH. Fix $k \in \mathcal{N}$. We have that:
$$
\begin{aligned}
\mathbb E\left[\frac{\id_{\{P_k \leq \alpha E_k R/K\}}}{R} \,\mid\, \mathbf{P}_{-k}, U\right]  &=  \mathbb E\left[E_k \frac{\id_{\{P_k \leq \alpha E_k R/K\}}}{E_k R} \,\mid\, \mathbf{P}_{-k}, U\right]  \\
& \leq \tilde{E}_k  \mathbb E\left[\frac{\id_{\{P_k \leq \alpha E_k R/K\}}}{E_k R} \,\mid\, \mathbf{P}_{-k}, U\right] \\ 
& \leq \tilde{E}_k \frac{\alpha}{K}.
\end{aligned}
$$
For the last inequality, we used the superuniformy lemma of~\citet[Lemma 1(b)]{ramdas2019unified} (also see Lemma 3.2 in~\citet{blanchard2008two}), noting that 
$E_k R$ is non-increasing in $P_k$ when holding $\mathbf{P}_{-k}$ and $U$ fixed. The reason is that $E_1,\ldots,E_K$ are non-increasing with $P_k$ (by assumption) and the rejections of ep-BH are non-increasing in $P_k$ and non-decreasing in $E_1,\ldots,E_K$.

Next, we can conclude using iterated expectation,
$$ \mathrm{FDR} = \sum_{k \in \mathcal{N}}\mathbb E\left[\frac{\id_{\{P_k \leq \alpha E_k R/K\}}}{R}\right] \leq \frac{\alpha}{K} \sum_{k \in \mathcal{N}}\mathbb E[\tilde{E}_k] \leq \alpha,$$
where in the last step we used the fact that $\tilde{E}_1,\ldots,\tilde{E}_K$ are compound e-values. 

By the same argument, if $\tilde{E}_1,\ldots,\tilde{E}_K$  are $\varepsilon$-approximate compound e-values, then $\mathrm{FDR}$ is controlled at $\alpha(1+\varepsilon)$.

\end{proof}

\subsection{Proof of Theorem~\ref{theorem:loo_epbh_censoring}}

\begin{proof}
As in the proof of Theorem~\ref{theorem:loo_epbh}, we write $R(\mathbf {P},U)$ for the number of rejections of ep-BH
Fix $k \in \mathcal{N}$. 
Let us start by considering the event that $H_k$ is rejected, which is equivalent to the event that $\{ P_k \leq \alpha E_k R(\mathbf{P},U) / K\}\cap \{ P_k \leq \tau\}$. On this event, we have that $\tilde{E}_k = E_k$ and moreover that $R(\mathbf {P}_{k \to 0},U) = R(\mathbf{P},U)$.  Therefore,
$$
\begin{aligned}
\mathbb E\left[\frac{\id_{\{P_k \leq \alpha E_k R(\mathbf{P},U)/K\}\cap \{ P_k \leq \tau\}}}{R(\mathbf{P},U)} \,\mid\, \mathbf{P}_{-k}, U\right] & \leq \mathbb E\left[\frac{\id_{\{P_k \leq \alpha \tilde{E}_k R(\mathbf{P}_{k \to 0},U)/K\}}}{R(\mathbf{P}_{k \to 0},U)} \,\mid\, \mathbf{P}_{-k}, U\right] \leq \frac{\alpha}{K} \tilde{E}_k,
\end{aligned}
$$
where in the last inequality we used the fact that $P_k$ is super-uniform conditional on  $\mathbf{P}_{-k}, U$ and that $\tilde{E}_k$, $R(\mathbf{P}_{k \to 0},U)$ are only functions of  $\mathbf{P}_{-k}, U$. Then, by iterated expectation:
$$
\mathrm{FDR} \leq \frac{\alpha}{K}\sum_{k \in \mathcal{N}}\mathbb E[\tilde{E}_k],
$$
and from here we can conclude.

\end{proof}

\subsection{Proof of Theorem~\ref{proposition:generic_loo_adaptive}}

\begin{proof}
FDR control follows from Theorem~\ref{theorem:loo_epbh}.  Moreover, for all $k \in \mathcal{K}$, it holds that,
$$
\frac{1}{\hat{\pi}_0(\mathbf{P}_{k \to 0})} \geq \frac{1}{\hat{\pi}_0(\mathbf{P})},
$$
and so it follows that Algorithm~\ref{algo:epbh_null_prop} makes at least as many discoveries as Algorithm~\ref{algo:usual_null_prop}.
\end{proof}

\subsection{Proof of Proposition~\ref{prop:new_combi}}

\begin{proof}
FDR control follows from Theorem~\ref{theorem:loo_epbh} and the fact that weighted averages of two sets of compound e-values are also compound e-values.
The compound e-values of
Algorithm~\ref{algo:epbh_combination} are at least as large as the implied compound e-values of Algorithm~\ref{algo:basic_combination}, see, e.g.,~\eqref{eq:combination-evalue-upper-bound}, and consequently, Algorithm~\ref{algo:epbh_combination} makes at least as many discoveries as Algorithm~\ref{algo:basic_combination}.

\end{proof}

\subsection{Storey procedure (Example~\ref{example:storey})}

For the Storey procedure, we still have to verify the $\varepsilon-$approximation claim when the constant $1$ is replaced by $c>0$.  We will show the stronger claim that the Storey+ leave-one-out compound e-values with $c$ instead of $1$ are $\varepsilon$-approximate compound e-values.
Let us write
\begin{equation*}
E_k^{c} := \frac{K(1-\tau)}{c + \sum_{\ell \neq k} \id_{\{ P_{\ell} > \tau\}}},
\end{equation*}
so that $E_k^1 = E_k^{\mathrm{Storey}+}$. We also write
$\mathcal{N}_{-k} = \mathcal{N}\setminus \{k\}$ and $K_0 = \pi_0 K = |\mathcal{N}|$. Then,
$$
\begin{aligned}
E_k^{c} - E_k^1 &= \frac{K(1-\tau)(1-c)}{ \left(c+ \sum_{\ell \neq k} \id_{\{ P_{\ell} > \tau\}}\right) \left(1+ \sum_{\ell \neq k} \id_{\{ P_{\ell} > \tau\}}\right)} \\ 
& \leq \frac{2K(1-\tau)(1-c)}{c} \frac{1}{ \left( 2 + \sum_{\ell \in \mathcal{N}_{-k} } \id_{\{ P_{\ell} > \tau\}}\right) \left(1+ \sum_{\ell \in \mathcal{N}_{-k} } \id_{\{ P_{\ell} > \tau\}}\right)}
\end{aligned}
$$
Taking expectations, we find that,
$$
\mathbb E[E_k^{c} - E_k^1] \leq \frac{2K(1-\tau)(1-c)}{c K_0(K_0+1)(1-\tau)^2} = \frac{2K(1-c)}{c K_0 (K_0+1) (1-\tau)}.
$$
Above we used Lemma~\ref{lemm:second_order_binom} below (noting that by a stochastic dominance argument, for the upper bound, it suffices to assume that $P_{\ell} \sim \mathrm{Unif}(0,1)$ for all $\ell \in \mathcal{N}$).  
Hence:
$$
\begin{aligned}
\sum_{k \in \mathcal{N}} \mathbb E[E_k^c] &= \sum_{k \in \mathcal{N}}\mathbb E[E_k^1] +  \sum_{k \in \mathcal{N}}\mathbb E[E_k^{c} - E_k^1]  \leq K\left(  1 + \frac{2(1-c)}{c(K_0+1)(1-\tau)}\right).
\end{aligned}
$$

\begin{lemma}
Let $X \sim \mathrm{Binom}(n,q)$. Then:
$$
\mathbb E\left[ \frac{1}{(X+1)(X+2)}\right] \leq \frac{1}{(n+1)(n+2)q^2}.
$$
\label{lemm:second_order_binom}
\end{lemma}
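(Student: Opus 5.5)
We need to prove $\mathbb E[1/((X+1)(X+2))] \le 1/((n+1)(n+2)q^2)$ for $X\sim\mathrm{Binom}(n,q)$. The plan is the standard index-shift identity for binomial moments of reciprocal rising factorials, which in fact gives an exact formula.

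First, for $0\le j\le n$ we use
$$
\frac{1}{(j+1)(j+2)}\binom{n}{j} = \frac{1}{(n+1)(n+2)}\binom{n+2}{j+2},
$$
which follows directly from $\binom{n+2}{j+2} = \frac{(n+2)(n+1)}{(j+2)(j+1)}\binom{n}{j}$.

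Second, substitute this into the expectation and absorb two factors of $q$:
$$
\mathbb E\left[\frac{1}{(X+1)(X+2)}\right] = \frac{1}{(n+1)(n+2)q^2}\sum_{j=0}^{n}\binom{n+2}{j+2}q^{j+2}(1-q)^{n-j}.
$$
Re-index with $m=j+2$. The sum becomes $\sum_{m=2}^{n+2}\binom{n+2}{m}q^m(1-q)^{n+2-m}$, which equals $\mathbb P(Y\ge 2)$ for $Y\sim\mathrm{Binom}(n+2,q)$. This is $1-(1-q)^{n+2}-(n+2)q(1-q)^{n+1}$.

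Third, bounding this probability by $1$ gives the claim. No step is a genuine obstacle; the only point needing care is the exponent bookkeeping after the shift, namely checking that $q^{j+2}(1-q)^{n-j}$ matches the $\mathrm{Binom}(n+2,q)$ mass at $m=j+2$, which it does since $n-j = (n+2)-m$. We assume $q\in(0,1]$; for $q=0$ the right-hand side is $+\infty$ and the bound is trivial.
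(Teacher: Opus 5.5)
Your proof is correct and follows the paper's argument exactly. Both use the index-shift identity $\binom{n}{j}\frac{1}{(j+1)(j+2)}=\binom{n+2}{j+2}\frac{1}{(n+1)(n+2)}$ and then bound the resulting shifted $\mathrm{Binom}(n+2,q)$ partial sum by $1$. Identifying that sum exactly as $\mathbb P(Y\ge 2)$ and treating $q=0$ separately are small additions the paper leaves implicit.
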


\begin{proof}[Proof of Lemma~\ref{lemm:second_order_binom}]

Notice the following binomial identity:

$$
\begin{aligned}
\binom{n}{k} \frac{1}{(k+1)(k+2)} &= \frac{n!}{(n-k)! k! (k+1)(k+2)} \\ 
& = \frac{(n+2)!}{(n-k)! (k+2)!  (n+1)(n+2)} = \binom{n+2}{k+2}\frac{1}{(n+1)(n+2)}.
\end{aligned}
$$
Using this identity, we find that,
$$
\begin{aligned}
\mathbb E\left[ \frac{1}{(X+1)(X+2)}\right] &= \sum_{k=0}^n \binom{n}{k} \frac{1}{(k+1)(k+2)} q^k (1-q)^{n-k} \\
&=\frac{1}{(n+1)(n+2)q^2} \sum_{k=0}^n \binom{n+2}{k+2} q^{k+2} (1-q)^{n-k} \\ 
& \leq \frac{1}{(n+1)(n+2)q^2}.
\end{aligned}
$$
\end{proof}

\subsection{IBHlog procedure (Example~\ref{example:IBHlog})}

We seek to show that $E_k^{\mathrm{IBHlog}}$ defined in~\eqref{eq:Ek_IBHlog} satisfy Assumption~\ref{assumption:loo_compound} for p-independent p-values.
\citet[Supplement C]{zeisel2011fdr} prove this result under the additional assumption that null p-values are uniformly distributed, that is, when $P_k \sim \mathrm{Unif}(0,1)$ for all $k \in \mathcal{N}$. By 
a stochastic monotonicity argument, their result also implies our claim for super-uniform null p-values, i.e., when $\mathbb P[P_k \leq t] \leq t$ for all $t \in (0,1)$ and all $k \in \mathcal{N}$.

\subsection{DM+ procedure and Proposition~\ref{proposition:DM+}}

We omit this proof since it's a special case of the proof of Proposition~\ref{proposition:dwm+} for W-DM+.

\subsection{MABH procedure (Example~\ref{example:mabh})}

We refer to~\citet{solari2017minimally} for the proof of FDR control with MABH and PRDS p-values. Here we verify that the \smash{$E_k^{\mathrm{MABH}}$} in~\eqref{eq:mabh} are indeed compound e-values and that ep-BH with these compound e-values is identical to MABH.

We split cases according to the value of $\pi_0$. If $\pi_0=1$ (that is, we are under the global null), then $\mathbb P(R_{\mathrm{BH},\alpha}(\mathbf{P})>0) \leq \alpha$ and so,
$$
\sum_{k \in \mathcal{K}} \mathbb E[E_k^{\mathrm{MABH}}] \leq K   \frac{K}{K-1}\alpha \leq K,
$$
where we use the fact that $\alpha \leq (K-1)/K$.
Meanwhile, if $\pi_0 < 1$, then it must be the case that $|\mathcal{N}| \leq K-1$. Then:
$$
\sum_{k \in \mathcal{N}} \mathbb E[E_k^{\mathrm{MABH}}]  \leq (K-1) \frac{K}{K-1} = K. 
$$
Thus \smash{$E_k^{\mathrm{MABH}}$} are indeed compound e-values.

Next we show the equivalence to MABH. If $R_{\mathrm{BH},\alpha}(\mathbf{P})=0$, then both p-BH and MABH make no discoveries. Similarly, in this case, \smash{$E_k^{\mathrm{MABH}}=0$} and so ep-BH also makes no discoveries.
If $R_{\mathrm{BH},\alpha}(\mathbf{P})>0$, then MABH is equivalent to p-BH applied at level $\alpha K/(K-1)$, which is equivalent to ep-BH with e-values equal to $K/(K-1)$.

\subsection{TST procedure (Example~\ref{example:TST})}

The proof that~\eqref{eq:E_k_tst} are compound e-values is as follows. Let $\alpha_K = \alpha'(K-1)/K$ and notice that

$$R_{\mathrm{BH},\alpha'}(\mathbf{P}_{{k} \to 1}) \leq \sum_{\ell \neq k } \id_{\{P_{\ell} \leq \alpha_K \}},$$
and so,
$$
K-R_{\mathrm{BH},\alpha'}(\mathbf{P}_{{k} \to 1})  \geq 1 + \sum_{\ell \neq k } \id_{\{P_{\ell} > \alpha_K \}}.
$$
Then:
$$
\begin{aligned}
\sum_{k \in \mathcal{N}} \mathbb E[E_k^{\mathrm{TST}}] &\leq \frac{1}{1+\alpha}\sum_{k \in \mathcal{N}}\mathbb E\left[\frac{K}{ 1+ \sum_{\ell \neq k } \id_{\{P_{\ell} > \alpha_K  \}} }\right]\\ 
& = \frac{1}{(1+\alpha)(1-\alpha_K)}\sum_{k \in \mathcal{N}}\mathbb E\left[\frac{K(1-\alpha_K)}{ 1+ \sum_{\ell \neq k } \id_{\{P_{\ell} > \alpha_K  \}} }\right] \\ 
& \leq \frac{K}{(1+\alpha)(1-\alpha_K)} \\ 
& = K\frac{K}{K+\alpha}.
\end{aligned}
$$
In the last step we used the fact that
$$
1-\alpha_K = 1-\frac{\alpha (K-1)}{(1+\alpha)K} = \frac{K + \alpha}{(1+\alpha)K}.
$$
It follows that $E_k^{\mathrm{TST}}$ are compound e-values, and moreover they remain compound e-values if multiplied by $(K+\alpha)/K$. Thus $E_k^{\mathrm{TST+}}$ are also compound e-values (that dominate $E_k^{\mathrm{TST}}$).

It remains to prove that the TST procedure is indeed equivalent to ep-BH with $E_k^{\mathrm{TST}}$. Suppose first that $k$ is such that $H_k$ is rejected by p-BH at $\alpha'$. Then $H_k$ will also be rejected by TST and ep-BH.  Next let $k$ be such that $H_k$ is not rejected by p-BH at $\alpha'$. Then $R_{\mathrm{BH},\alpha'}(\mathbf{P}_{{k} \to 1}) = R_{\mathrm{BH},\alpha'}(\mathbf{P})$ and so we can check that indeed $H_k$ will be rejected by TST if and only if it is rejected by ep-BH.

\section{Simultaneous t-tests}
\label{sec:simultaneous_ttests}

\subsection{Statistical setting and proposed methods}
\label{subsec:statistical_setting_ttest}
In this section, we apply the leave-one-out ep-BH framework to the problem of simultaneous t-tests.
For the $k$-th hypothesis we observe 
\begin{equation}
Y_{kj} \sim \mathrm{N}(\mu_k,\, \sigma_k^2), \text{ for } j=1,\dotsc,n,\;\; \mu_k \in \mathbb R,\;\; \sigma_k > 0,
\label{eq:gaussian_replicates}
\end{equation}
and we assume that all $Y_{kj},\, 1 \leq k \leq K,\, 1 \leq j \leq n$ are mutually independent with $n \geq 2$. We seek to test $H_k: \mu_k=0$ and to do so, we compute p-values based on the standard t-test:
\begin{equation}
\label{eq:one_sample_ttest_pvalue}
\hat{\mu}_k := \frac{1}{n}\sum_{j=1}^n Y_{kj},\quad \hat{\sigma}_k^2 := \frac{1}{n-1} \sum_{j=1}^n (Y_{kj} - \hat{\mu}_k)^2,\quad T_k := \frac{\sqrt{n}\hat{\mu}_k}{\hat{\sigma}_k},
\end{equation}
and $P_k := 2\{1-F_{t, n-1}(|T_k|)\}$, where $F_{t, n-1}$ is the CDF of the t-distribution with $n-1$ degrees of freedom. Note that the p-values $P_1,\ldots,P_K$ satisfy p-independence.

An important idea that goes back to~\citet{westfall2004weighted} is that the above p-values may leave out important information. Consider the mean of squares for the $k$-th hypothesis,
$$S_k^2 := \frac{1}{n}\sum_{j=1}^{n} Y_{kj}^2,\;\;\; S_k^2 \sim \frac{\sigma_k^2}{n} \chi^2_n\left( \frac{n \mu_k^2}{\sigma_k^2}\right),$$
where $\chi^2_n(u)$ is the chi-square distribution with $n$ degrees of freedom and noncentrality parameter $u$. From this expression we see that $S_k^2$ contains information about $\mu_k$. On the other hand, by Basu's theorem, it holds that
\begin{equation}
\label{eq:s_k_p_k_independence}
S_k^2\, \text{ is independent of}\; P_k \;\text{ for } k \in \mathcal{N}.
\end{equation}
The above suggests the following procedure. Let $\psi: [0,\infty) \to [0, \infty)$ be any non-decreasing function and define the normalized weights,
\begin{equation}
w_k := \frac{K\psi(S_k^2)}{\sum_{\ell \in \mathcal{K}}  \psi(S_{\ell}^2)},
\label{eq:normalized_weights}
\end{equation}
where we would set $w_k=1$ if the denominator is zero.
Then weighted p-BH with p-values $P_k$ and weights $w_k$ controls the FDR. Various weighting functions have been proposed: $\psi(s) = s^{\nu}$ for fixed $\nu > 0$ \citep{westfall2004weighted}, $\psi(s) = \id(s > c)$ for $c > 0$ \citep{finos2007fdr, bourgon2010independent, guo2017analysis}, and $\psi(s) = s \id(s>c)$ for $c > 0$ \citep{finos2007fdr}.

\citet{ignatiadis2024evalues} observe that one could in principle develop more powerful procedures than the above by using compound e-values. As an example, consider the case with $\psi(u)=u$. Then, instead of the weights in~\eqref{eq:normalized_weights}, an oracle could use the (oracle) compound e-values,
$$
E_k^{\mathrm{or}} := \frac{K S_k^2}{ \sum_{\ell \in \mathcal{K}} \sigma_{\ell}^2} \quad\quad \text{ as compared to }\quad\quad w_k := \frac{K S_k^2}{\sum_{\ell \in \mathcal{K}} S_{\ell}^2},
$$
and the former are preferable since
$$\sum_{\ell \in \mathcal{K}} S_{\ell}^2 \approx \mathbb E\left[\sum_{\ell \in \mathcal{K}} S_{\ell}^2\right] = \sum_{\ell \in \mathcal{K}}( \sigma_{\ell}^2 + \mu_{\ell}^2) \geq \sum_{\ell \in \mathcal{K}} \sigma_{\ell}^2.
$$
\citet{ignatiadis2024evalues} propose to approximate the oracle $E_k^{\mathrm{or}}$ by $E_k^{\mathrm{approx}} := K S_k^2 / \sum_{\ell=1}^K \hat{\sigma}_{\ell}^2.$ Using concentration arguments that \smash{$\sum_{\ell=1}^K \hat{\sigma}_{\ell}^2 \approx \sum_{\ell \in \mathcal{K}} \sigma_{\ell}^2$} with high probability, they then establish that ep-BH with the t-test p-values and the (asymptotic) compound e-values \smash{$E_k^{\mathrm{approx}}$} asymptotically controls the FDR under certain conditions. 

Their procedure thus provides a strong proof-of-principle that one can go beyond the normalization in~\eqref{eq:normalized_weights}. However, it leaves open the following question. How to develop a procedure that improves upon weighted p-BH with the weights in~\eqref{eq:normalized_weights},
has finite-sample FDR control (without requiring any further assumption) and that can work for any choice of the function $\psi(\cdot)$?

To address the above question, we propose the LOO-Var+ method. Let 
$$B_k \simiid \frac{n}{n-1}\mathrm{Beta}( (n-1)/2,\, 1/2)$$ for all $k \in \mathcal{K}$ and independent of everything else. Then define,
\begin{equation}
E_k^{\mathrm{LOOVar+}} := \frac{ K\psi( B_k S_k^2)}{ \psi( B_k S_k^2) +  \sum_{\ell \neq k} \psi( \hat{\sigma}_{\ell}^2)}, \,\, \text{ or }\,\, E_k^{\mathrm{LOOVar+}} := \int \frac{ K\psi( b S_k^2)}{ \psi( b S_k^2) +  \sum_{\ell \neq k} \psi( \hat{\sigma}_{\ell}^2)} p(b)\dd b,
\label{eq:loovar_compound_evals}
\end{equation}
with the convention $0/0=0$ and where $p(\cdot)$ is the density of $B_k$.\footnote{A related construction using Beta random variables appears in~\citet[Supplement D.3]{barber2025false} to construct approximate compound p-values in our setting.} (The right-hand option is the derandomized version of the left-hand-side.)

We have the following result.
\begin{theorem}
Suppose that $\psi(\cdot): [0,\infty) \to [0,\infty)$ is any fixed non-decreasing function. Then,
$E_1,\ldots,E_K$ defined in~\eqref{eq:loovar_compound_evals} are compound e-values. The ep-BH procedure with the t-test p-values $P_1,\ldots,P_K$ and these compound e-values controls the $\mathrm{FDR}$ at level $\alpha$. 
\label{theorem:loo_var}
\end{theorem}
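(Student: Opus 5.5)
The plan is to prove the two claims separately: first the compound e-value property, via a distributional identity that makes $B_k S_k^2$ mimic $\hat{\sigma}_k^2$ under the null; then FDR control, by rerunning the conditional argument from the proof of Theorem~\ref{theorem:loo_epbh} with a hypothesis-specific conditioning.

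\emph{Step 1 (distributional identity).} Fix $k\in\mathcal{N}$. Under $H_k$ we have $nS_k^2/\sigma_k^2\sim\chi^2_n$, independent of $B_k$. By beta--gamma algebra, $\chi^2_n\cdot\mathrm{Beta}((n-1)/2,1/2)$, with independent factors, is distributed as $\chi^2_{n-1}$. Hence
$$B_k S_k^2 \overset{d}{=} \sigma_k^2\chi^2_{n-1}/(n-1) \overset{d}{=} \hat{\sigma}_k^2 .$$
Moreover $B_kS_k^2$ is independent of $(\hat{\sigma}_\ell^2)_{\ell\neq k}$, because the data for different hypotheses are independent. Therefore the vector $(\psi(B_kS_k^2),(\psi(\hat\sigma_\ell^2))_{\ell\ne k})$ has the same law as $(\psi(\hat\sigma_\ell^2))_{\ell\in\mathcal{K}}$. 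This gives
$$\E[E_k]=\E\Big[\frac{K\psi(\hat\sigma_k^2)}{\sum_{\ell\in\mathcal{K}}\psi(\hat\sigma_\ell^2)}\Big].$$
For the derandomized version, the same identity follows by Fubini, since it is the conditional expectation of the randomized version given the data.

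\emph{Step 2 (compound e-values).} Summing the identity from Step 1 over $k\in\mathcal{N}$ gives
$$\sum_{k\in\mathcal{N}}\E[E_k]=K\,\E\Big[\frac{\sum_{k\in\mathcal{N}}\psi(\hat\sigma_k^2)}{\sum_{\ell\in\mathcal{K}}\psi(\hat\sigma_\ell^2)}\Big]\le K,$$
using the convention $0/0=0$. This proves the compound e-value claim.

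\emph{Step 3 (FDR control).} I would not apply Theorem~\ref{theorem:loo_epbh} with a single global $U$. The reason is that conditioning on both $S_k^2$ and $\hat\sigma_k^2$ determines $|\hat\mu_k|$, and hence $P_k$. Instead, I would rerun the argument of that theorem's proof with a $k$-specific conditioning. Fix $k\in\mathcal{N}$ and condition on
$$\mathcal{G}_k:=\big(S_k^2,\;B_k,\;(Y_{\ell j})_{\ell\ne k,\,j\le n},\;(B_\ell)_{\ell\ne k}\big).$$
By Basu's theorem~\eqref{eq:s_k_p_k_independence} and independence across hypotheses, $P_k$ is still uniform given $\mathcal{G}_k$. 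Also, $E_k$ is $\mathcal{G}_k$-measurable, since it involves only $S_k^2$, $B_k$ and $\hat\sigma_\ell^2$ for $\ell\ne k$. The other quantities depend on $P_k$ through $\hat\sigma_k^2$. Given $S_k^2$, we have $(n-1)\hat\sigma_k^2=nS_k^2-n\hat\mu_k^2$ and $T_k^2=n\hat\mu_k^2/\hat\sigma_k^2$. Together these show that $\hat\sigma_k^2$ is a decreasing function of $|T_k|$, and hence an increasing function of $P_k$.

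The next step is to show that $R$ is non-increasing in $P_k$ given $\mathcal{G}_k$. Since $\psi$ is non-decreasing, $\psi(\hat\sigma_k^2)$ increases in $P_k$. This term appears only in the denominators of $E_j$ for $j\ne k$, so each such $E_j$ is non-increasing in $P_k$. The ep-BH rejection count is non-increasing in $P_k$ and non-decreasing in $\mathbf{E}$, so the claim follows. With $E_k$ fixed given $\mathcal{G}_k$, the superuniformity lemma of \citet[Lemma 1(b)]{ramdas2019unified} then gives
$$\E\big[\id_{\{P_k\le\alpha E_kR/K\}}/R\mid\mathcal{G}_k\big]\le\alpha E_k/K .$$
Summing over $k\in\mathcal{N}$, taking expectations and using Step 2 yields $\mathrm{FDR}\le\alpha$.

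For the derandomized version, the same argument works after absorbing $B_k$ into the integral, since $E_k$ is then a function of $S_k^2$ and $(\hat\sigma_\ell^2)_{\ell\neq k}$ only. I expect the main obstacle to be Step 3: choosing a conditioning under which $P_k$ stays uniform, and verifying the monotonicity of $\hat\sigma_k^2$ in $P_k$ given $S_k^2$. Step 1 is the key idea but is routine once the beta--gamma identity is recalled.
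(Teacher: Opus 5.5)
Your proof is correct, and Steps 1--2 match the paper's proof of the compound e-value property. The paper asserts $\hat\sigma_k^2 \stackrel{\mathcal{D}}{=} B_kS_k^2$ directly from the $\chi^2_{n-1}$ and $\chi^2_n$ laws; your beta--gamma remark supplies that justification.

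The difference is in Step 3. The paper does not rerun the leave-one-out argument. It invokes Theorem~\ref{theorem:loo_epbh} with the single global choice $U=(S_1^2,\ldots,S_K^2,B_1,\ldots,B_K)$, or $U=(S_1^2,\ldots,S_K^2)$ for the derandomized version.

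Your stated reason for avoiding a global $U$ is mistaken, because $U$ never needs to contain any $\hat\sigma_\ell^2$. Once $S_\ell^2$ is fixed, $\hat\sigma_\ell^2=nS_\ell^2/\{(n-1)+T_\ell^2\}$ is an increasing function of $P_\ell$. This is the same identity you use for index $k$, applied to every $\ell$. Consequently:
\begin{itemize}
\item each $E_k$ is a function $h_k(\mathbf{P}_{-k},U)$ that is coordinatewise non-increasing, so condition ($\ast$) holds with $\tilde E_k=E_k$;
\item by Basu's theorem and independence across hypotheses, the p-values remain p-independent with uniform nulls conditional on $U$.
\end{itemize}

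Your hypothesis-specific conditioning on $\mathcal{G}_k$ refines $(\mathbf{P}_{-k},U)$. It is equally valid and amounts to re-proving Theorem~\ref{theorem:loo_epbh} in this special case. Your version is self-contained and makes explicit which variables move with $P_k$. The paper's route is shorter: FDR control reduces to a monotonicity check and a conditional-independence check, followed by a black-box appeal to the general theorem.
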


In this setting, we could also have treated the normalized weights in~\eqref{eq:normalized_weights} as deterministic weights and then combined them with a weighted adaptive procedure as the ones in Section~\ref{subsec:wtd_adaptivity}. Below, we briefly compare W-LOO-Storey+ vs. LOO-Var+ vs normalized weights (without null-proportion adaptivity) in the case of $\psi(s) =s$.:
$$
E_k^{\mathrm{LOOVar+}} = \int \frac{K b S_k^2}{bS_k^2 + \sum_{\ell \neq k} \hat{\sigma}_{\ell}^2} p(b) \dd b,\;\;\,E_k^{\mathrm{WStorey+}} = \frac{K S_k^2 (1-\tau)}{S_k^2 + \sum_{\ell \neq k} S_{\ell}^2\id_{\{ P_{\ell} > \tau\}}},\;\;\,w_k = \frac{K S_k^2}{\sum_{\ell \in \mathcal{K}} S_{\ell}^2} 
$$
We first consider a sparse regime with strong signals. Specifically we fix the number of alternatives $K_1$ and $K$, and suppose that $\mu_k \equiv \mu$ for all $k \notin \mathcal{N}$. We then take $\mu \to \infty$. We find the following limiting expressions for $k \notin \mathcal{N}$:
$$
E_k^{\mathrm{LOOVar+}} \to K,\;\;\; E_k^{\mathrm{WStorey+}} \to K(1-\tau),\;\;\; w_k \to \frac{K}{K_1}.
$$
Thus, in this sparse regime with strong signals, $E_k^{\mathrm{LOOVar+}}$ outperforms both $E_k^{\mathrm{WStorey+}}$ and unnormalized weights. Moreover, if there is only one signal $(K_1=1)$, then the normalized weights $w_k$ outperform W-LOO-Storey+.

We next consider regimes for which the denominators of the compound e-values/weights concentrate around their expectation and the leave-one-out contribution is negligible:
$$
\begin{aligned}
&E_k^{\mathrm{LOOVar+}} \approx \frac{K S_k^2}{ \sum_{\ell \in \mathcal{N}}\sigma_{\ell}^2 \,+\, \sum_{\ell \notin \mathcal{N}}\sigma_{\ell}^2},\;\;\; E_k^{\mathrm{WStorey+}} \approx \frac{K S_k^2}{ \sum_{\ell \in \mathcal{N}} \sigma_{\ell}^2 + \sum_{\ell \notin \mathcal{N}} \mathbb E\left[ S_{\ell}^2\frac{\id_{\{ P_{\ell} > \tau\}}}{1-\tau}\right]},\\ 
&w_k \approx \frac{KS_k^2}{ \sum_{\ell \in \mathcal{N}}\sigma_{\ell}^2 \,+\, \sum_{\ell \notin \mathcal{N}}(\sigma_{\ell}^2 + \mu_{\ell}^2)}.
\end{aligned}
$$
Therefore the comparison in this regime boils down to comparing (the smaller, the better) 
$$
\sum_{\ell \notin \mathcal{N}}\sigma_{\ell}^2 \;\;\text{ vs. }\;\; \sum_{\ell \notin \mathcal{N}} \mathbb E\left[S_{\ell}^2\frac{\id_{\{ P_{\ell} > \tau\}}}{1-\tau} \right] \;\;\text{ vs. } \sum_{\ell \notin \mathcal{N}}(\sigma_{\ell}^2 + \mu_{\ell}^2).
$$
We see that LOO-Var+ eliminates $\mu_{\ell}$ for $\ell \notin \mathcal{N}$. By contrast, W-LOO-Storey+ may still be impacted by $\mu_{\ell}$ since the distribution of $S_{\ell}$ for $\ell \notin \mathcal{N}$ depends on $\mu_{\ell}$. However, LOO-Var+ always pays the price $\sigma_{\ell}^2$ for all $\ell \notin \mathcal{N}$, while W-LOO-Storey+ could potentially pay a price $\approx 0$ if $|\mu_{\ell}|/\sigma_{\ell}$ is sufficiently large so that $P_{\ell} > \tau$ with very low probability. In this sense, W-LOO-Storey+ can outperform LOO-Var+ when signals are strong and $\pi_0 \ll 1$.

Overall, we may expect both LOO-Var+ and W-LOO-Storey+ to outperform each other in different regimes; see also the simulation study below.

\subsection{Proof of Theorem~\ref{theorem:loo_var}}

\begin{proof}
We first prove that the randomized $E_k$ are compound e-values; the statement for the derandomized $E_k$ then follows immediately.
The argument hinges on the following observation. For $k \in \mathcal{N}$,
$$
(n-1)\hat{\sigma}_k^2 \sim \sigma_k^2 \chi^2_{n-1},\;\; n S_k^2 \sim \sigma_k^2 \chi^2_n.   
$$
This means that $\hat{\sigma}_k^2 \stackrel{\mathcal{D}}{=}B_k S_k^2$ and so:
$$
\mathbb E[E_k] =K \mathbb E\left[ \frac{ \psi( B_k S_k^2)}{ \psi( B_k S_k^2) +  \sum_{\ell \neq k} \psi( \hat{\sigma}_{\ell}^2)}\right] = K\mathbb E\left[ \frac{ \psi( \hat{\sigma}_k^2)}{  \sum_{\ell \in \mathcal{K}} \psi( \hat{\sigma}_{\ell}^2)}\right]. 
$$
Summing over all $k \in \mathcal{N}$, it follows that $E_1,\ldots,E_K$ are compound e-values.

To prove FDR control, we will apply  Theorem~\ref{theorem:loo_epbh} with condition ($\ast$) and the choice $U := (S_1^2, \ldots, S_K^2, B_1,\ldots,B_K)$ for the randomized procedure and $U:= (S_1^2, \ldots, S_K^2)$ for the derandomized procedure. In particular, we must argue about the monotonicity of the $E_k$ when treated as functions of $P_1,\ldots,P_K$ and $U$. To this end, it will be helpful to note the following standard decomposition: 
$$\sum_{j=1}^n Y_{kj}^2 = \sum_{j=1}^n (Y_{kj} - \hat{\mu}_k)^2 + n \hat{\mu}_k^2\;\; \Longrightarrow\;\; n S_k^2 = (n-1)\hat{\sigma}_k^2 + n \hat{\mu}_k^2.$$
Dividing by $\hat{\sigma}_k^2$ yields that $n S_k^2 / \hat{\sigma}_k^2 = (n-1) + T_k^2$, i.e., 
$$\hat{\sigma}_k^2 = \frac{n S_k^2}{(n-1) + T_k^2}.$$
Thus $\hat{\sigma}_k^2$ is decreasing in 
$T_k^2$, and so it is increasing in $P_k$. Meanwhile, $E_k$ is non-increasing in all $\hat{\sigma}_{\ell}^2$ and so it is also non-increasing in all $P_1,\ldots,P_K$.
\end{proof}

\subsection{Simulation study}
\label{sec:simulation}

We consider the t-test setting with $K=200$ and $n \in \{2,5,20\}$. We fix the number of alternative hypotheses $K_1 \in \{2, 5, 100\}$. We set $\sigma_k^2=1$ for all $k$ and for the alternatives, $\mu_k = \xi \sqrt{n}$ with effect size $\xi \in [2,8]$. 

We apply the following procedures to control the FDR at $\alpha=0.1$. For weighted methods, we consider two choices of the function $\psi(\cdot)$, namely $\psi(u) = u^4$ and $\psi(u) = u^2 \id_{\{u \geq 1 \}}$.

\begin{enumerate}[leftmargin=*]
\item pBH (BH) based on the t-test p-values;
\item weighted BH (W-BH) with weights normalized to sum to $K$;
\item weighted Storey as in Example~\ref{example:max_storey} (W-Max-Storey) with $\tau=0.5$;
\item weighted Storey as in Example~\ref{example:loo_weighted_storey} with $\tau=0.5$ where we consider both variants with $\tau$-censoring (W-LOO-Storey~\citep{zhao2024censored}) and without $\tau$-censoring (W-LOO-Storey+);
\item the leave-one-out sample variance method from Section~\ref{sec:simultaneous_ttests} (LOO-Var+).
\item a method that averages the implicit compound e-values of LOO-Var+ and W-LOO-Storey+ following Algorithm~\ref{algo:epbh_combination} (LOO-Var/Storey+).
\end{enumerate}
In each case, we report the power (defined as the expected fraction of alternatives discovered) and FDR, averaged over 10,000 Monte Carlo replications.\footnote{We provide code to reproduce this simulation study under the following Github repository:\\ \url{https://github.com/nignatiadis/pi0-and-compound-evalues-paper}}  

We show results in Figure~\ref{fig:psi1} for $\psi(u)=u^4$. W-LOO-Storey~\citep{zhao2024censored} and W-LOO-Storey+ (ours) performed identically, which is not surprising given the tiny improvement that we expect the latter to provide over the former, so we only discuss W-LOO-Storey+ below.
We note that all methods control FDR at the nominal level.
In most settings, BH and W-BH have the least power, although W-BH outperforms W-Max-Storey for $K_1=n=2$. LOO-Var+ and W-LOO-Storey+ are in general the most powerful procedures, and are also tied in most settings. The largest differences between them occur for $K_1=n=2$, in which case LOO-Var+ is more powerful, and for $K_1=100$ (i.e., $\pi_0=0.5$) and $n\in \{5,20\}$, in which case W-LOO-Storey+ is slightly more powerful. These observations are in line with our heuristic power comparison between LOO-Var+ and W-LOO-Storey+ in Section~\ref{subsec:statistical_setting_ttest}. The combination method (LOO-Var/Storey+) usually has performance that is sandwiched between the performance of W-LOO-Storey+ and LOO-Var+, and so provides a promising method when it is a-priori unclear if 
LOO-Var+ or W-LOO-Storey+ will be more powerful. Figure~\ref{fig:psi2} reports results for $\psi(u)=u^2 \id_{\{u \geq 1 \}}$; the overall pattern is broadly similar.

\begin{figure}
\centering
\includegraphics[width=0.95\textwidth]{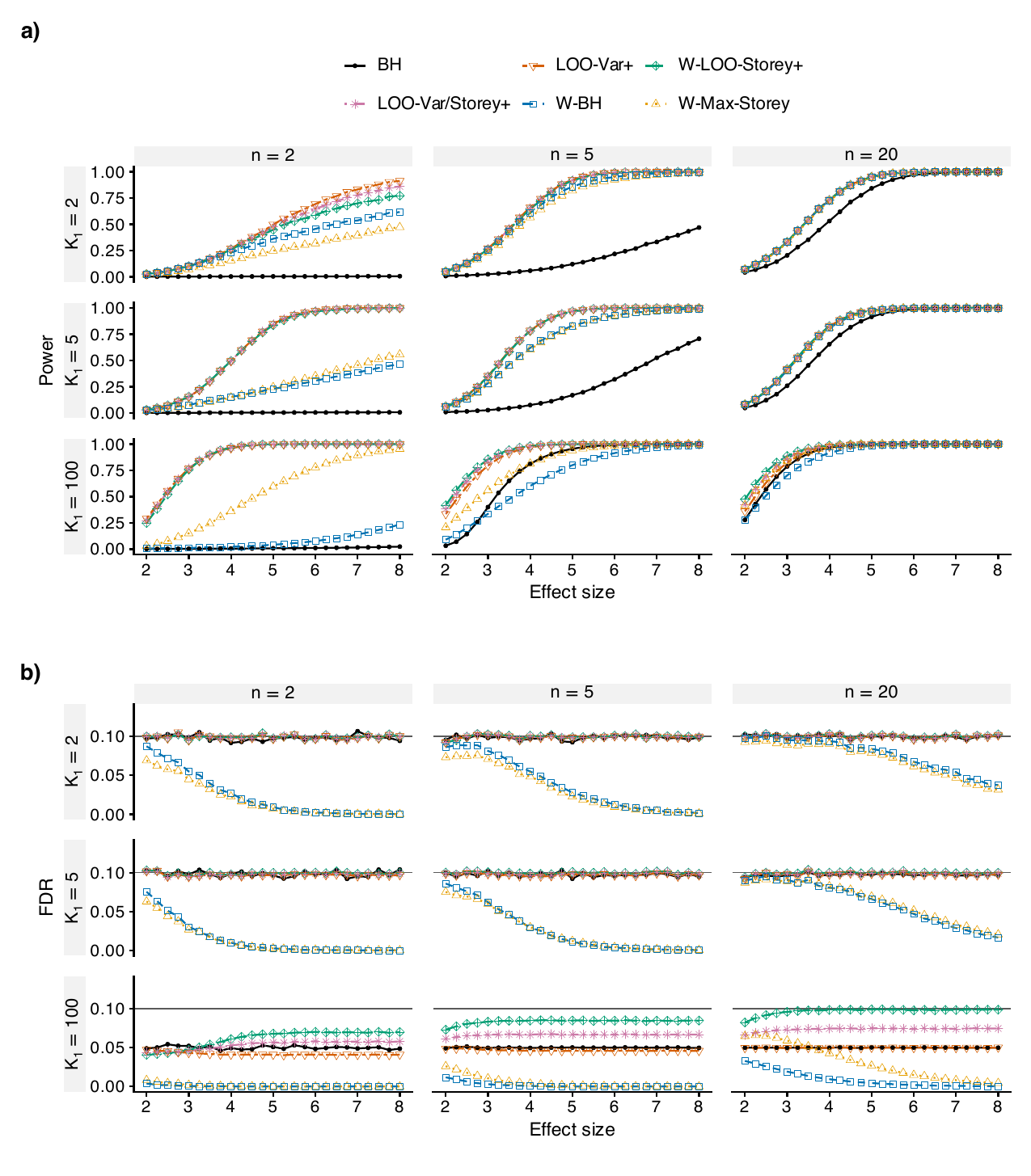}
\caption{ Power (\textbf{a}) and FDR (\textbf{b}) versus effect size $\xi$ for 7 different methods for multiple testing with simultaneous t-tests. We use the weighting function $\psi(u)=u^4$. The different facets correspond to different sample size ($n \in \{2,5,20\}$) and number of alternative hypotheses $K_1 \in \{2,5,100\}$, where the total number of hypotheses is $K=200$.
In this simulation study, W-LOO-Storey~\citep{zhao2024censored} and W-LOO-Storey+ (ours) performed identically, which is not surprising given the tiny improvement that we expect the latter to provide over the former. For this reason, the legend only shows W-LOO-Storey+. All methods control FDR at the nominal $\alpha=0.1$. LOO-Var+ (ours) and W-LOO-Storey+ have the most power across facets.}
\label{fig:psi1}
\end{figure}

\begin{figure}
\centering
\includegraphics[width=0.95\textwidth]{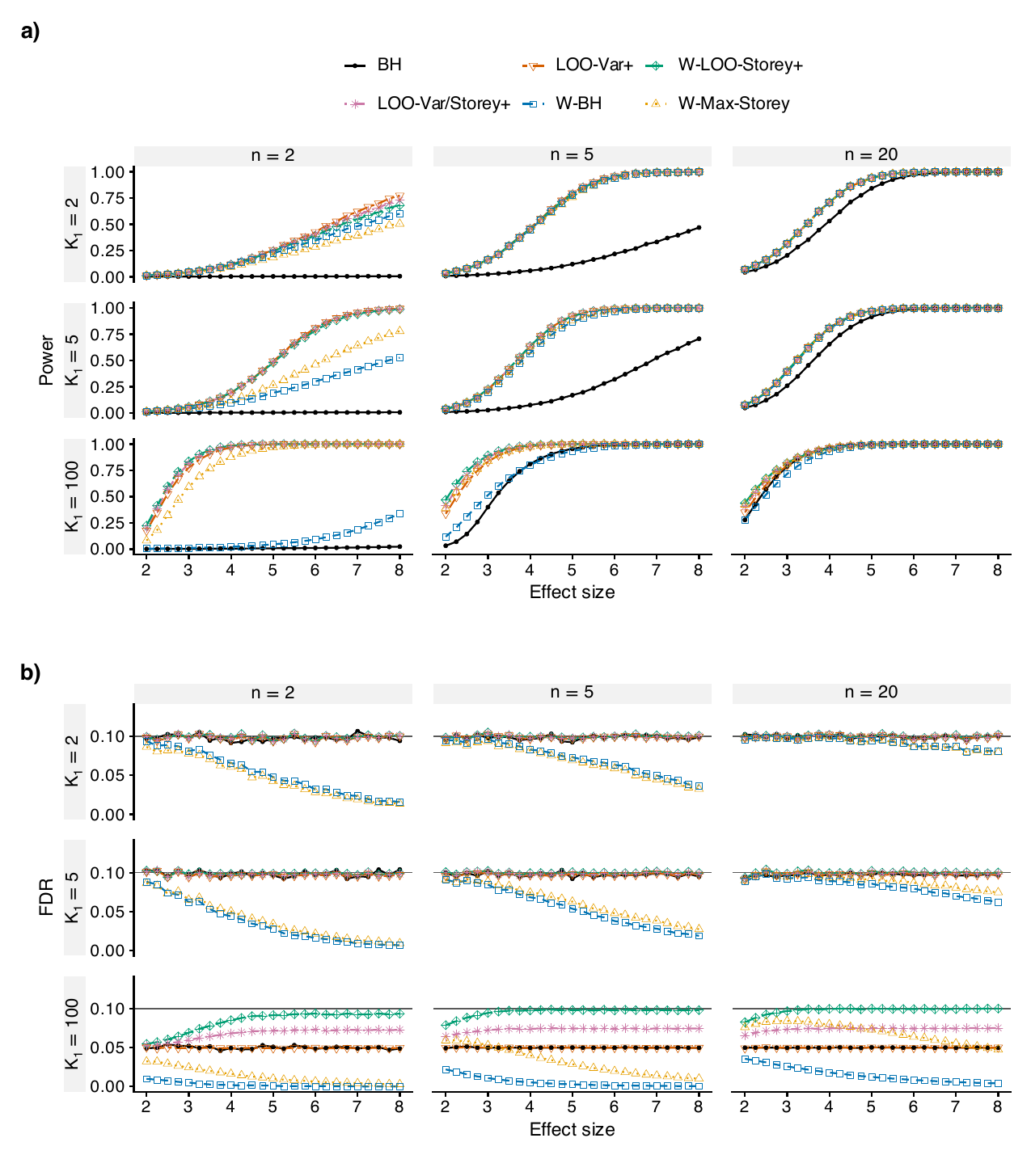}
\caption{Analogous to Figure~\ref{fig:psi1} with a different choice of weighting function, namely  $\psi(u)=u^2 \id_{\{u \geq 1 \}}$.}
\label{fig:psi2}
\end{figure}

\newpage 
\section{Further supplementary figures}

\begin{figure}[H]
\centering
\begin{tabular}{cc}
    \small\textbf{(a)} Standard Storey ep-BH, $\alpha=0.4,\ \tau=0.4$ &
  \small\textbf{(b)}  Storey+ (LOO) ep-BH, $\alpha=0.4,\ \tau=0.4$ \\
  \includegraphics[width=0.48\linewidth]{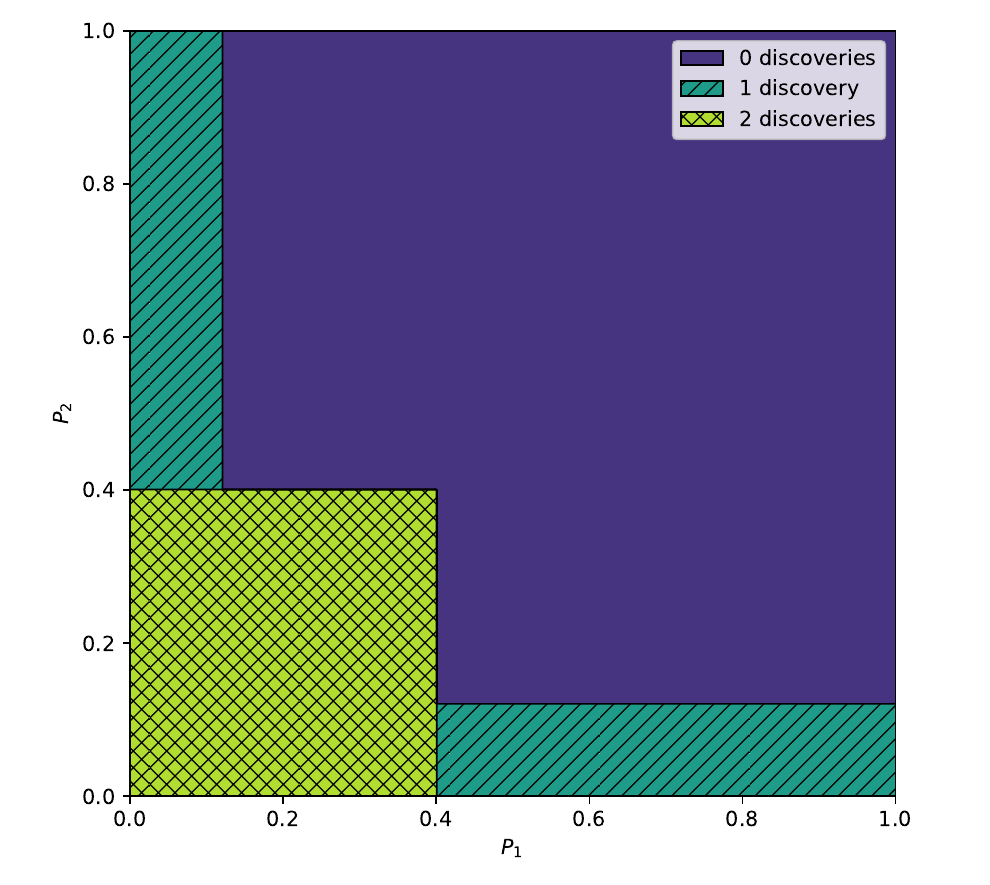} &
  \includegraphics[width=0.48\linewidth]{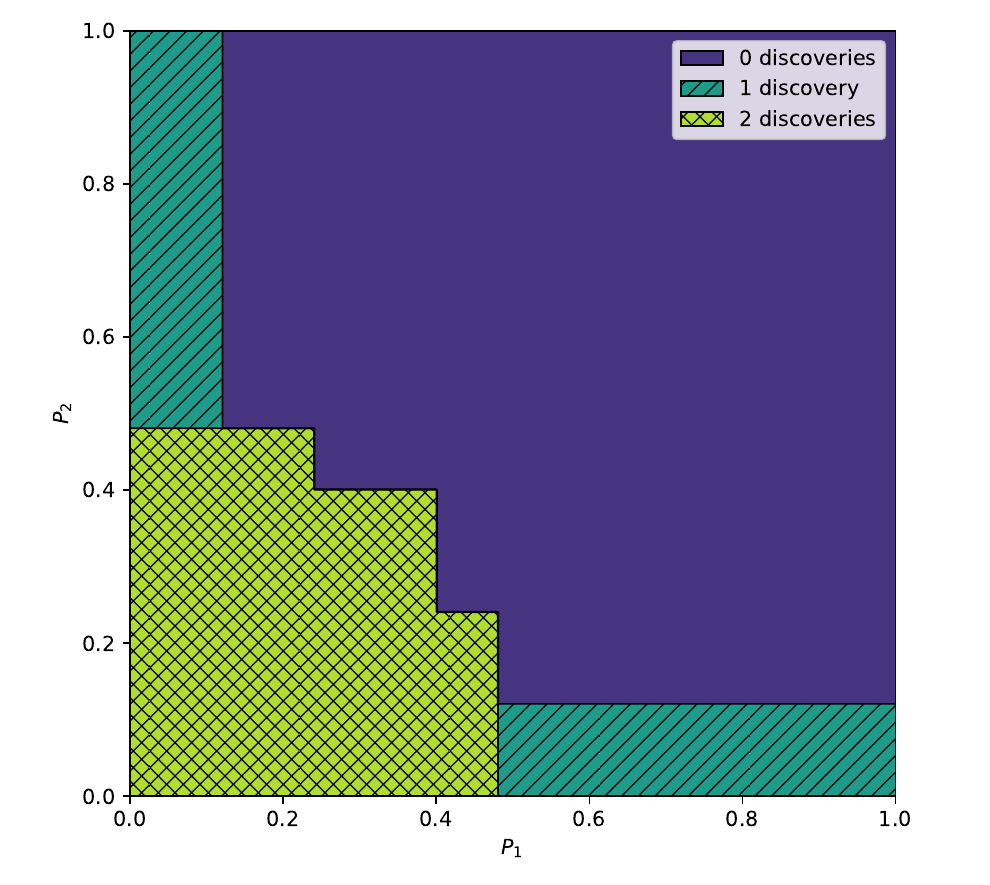} 
\end{tabular}

\caption{Rejection regions for two hypotheses ($K=2$). Colors and hatching indicate 0/1/2 discoveries. Our proposed Storey+ dominates standard Storey tuned with same parameters $\alpha$ and $\tau$. This figure is analogous to Fig.~\ref{fig:regions_alpha04_tau02} in the main text but with different choices of $\alpha$ and $\tau$.
}
\label{fig:regions_alpha04_tau04}
\end{figure}

\end{document}